   \newtheorem{theorem}{Theorem}
   \newtheorem{thm1}{Theorem}
   \newtheorem{Assumption}{Assumption}
   \newtheorem{lemma}{Lemma}
   \newtheorem{proposition}{Proposition}
   \newtheorem{definition}{Definition}
\begin{document}

%\preprint{APS/123-QED}

\title{Reconstruction of Quantum Particle Statistics: Bosons, Fermions, and Transtatistics}% Force line breaks wit

\author{Nicolás Medina S\'anchez}
\email{nicolas.medina.sanchez@univie.ac.at}
%\orcid{0009-0009-5802-026X}
\affiliation{University of Vienna, Faculty of Physics, Vienna Center for Quantum Science and Technology (VCQ), Boltzmanngasse 5, 1090 Vienna, Austria}
\affiliation{University of Vienna, Vienna Doctoral School in Physics, Boltzmanngasse 5, 1090 Vienna, Austria}
\author{Borivoje Daki\'c}
\email{borivoje.dakic@univie.ac.at}
%\orcid{0000-0001-9895-4889}
\affiliation{University of Vienna, Faculty of Physics, Vienna Center for Quantum Science and Technology (VCQ), Boltzmanngasse 5, 1090 Vienna, Austria}
\affiliation{Institute for Quantum Optics and Quantum Information (IQOQI), Austrian Academy of Sciences, Boltzmanngasse 3, 1090 Vienna, Austria.}

\date{\today}% It is always \today, today,
             %  but any date may be explicitly specified

\begin{abstract}
Identical quantum particles exhibit only two types of statistics: bosonic and fermionic. Theoretically, this restriction is commonly established through the symmetrization postulate or (anti)commutation constraints imposed on the algebra of creation and annihilation operators. The physical motivation for these axioms remains poorly understood, leading to various generalizations by modifying the mathematical formalism in somewhat arbitrary ways. In this work, we take an opposing route and classify quantum particle statistics based on operationally well-motivated assumptions. Specifically, we consider that a) the standard (complex) unitary dynamics defines the set of single-particle transformations, and b) phase transformations act locally in the space of multi-particle systems. We develop a complete characterization, which includes bosons and fermions as basic statistics with minimal symmetry. Interestingly, we have discovered whole families of novel statistics (dubbed transtatistics) accompanied by hidden symmetries, generic degeneracy of ground states, and spontaneous symmetry breaking-- effects that are (typically) absent in ordinary statistics.
\end{abstract}

\maketitle

\section{Introduction}\label{sec:level1}
The concept of identical particles was introduced by 
Gibbs in 1902~\cite{gibbs} as an alternative to solve the problem related to the extensitivity of entropy, the so-called \emph{Gibbs paradox}. According to Gibbs, a system consists of identical particles if its physical magnitudes are invariant under any permutation of its elements. Bose has put forward this idea in quantum mechanics in his derivation of Planck's law of blackbody radiation~\cite{Bose1924}. This was further developed by Dirac \cite{dirac} and Heisenberg \cite{heisenberg}, who formulated the well-known \emph{symmetrization postulate}: physical states must be symmetric in such a way that the exchange of particles does not give any observable effect. Put in the standard language of wavefunctions, if the state of, say, two particles is given by $\psi(x_1,x_2)$, then
\begin{equation}\label{symm. postulate}
\psi(x_2,x_1)=e^{i\varphi}\psi(x_1,x_2).
\end{equation}
Applying the particle swap twice trivially reveals $e^{i\varphi}=\pm1$. This is the origin of two types of particle statistics: \emph{bosons} (symmetric) and \emph{fermions} (antisymmetric). 

Another approach to explain the origin of quantum statistics is the \emph{topological} argument~\cite{myrheim,dewitt}. Namely, the exchange symmetry is directly related to the continuous movement of particles in a physical (configuration) space, which implies that only bosonic and fermionic phases are allowed, given that the number of spatial dimensions is three or greater. In lower dimensions, one gets fractional phases and anyonic statistics~\cite{Wilczek82}. 

The third common way of addressing the question of particle statistics is to take the \emph{algebraic (field)} approach~\cite{weinberg95}, i.e., by postulating the set of canonical relations
\begin{equation}\label{}
    [a_i,a_j^{\dagger}]_{\pm}=\delta_{ij}\openone,
\end{equation}
where $\pm$ stands for (anti)commutator of operators (for fermions and bosons, respectively). Starting with an assumption of a unique vacuum state, one can build the multi-particle state space (Fock space) for two types of particle statistics.

While these approaches agree at the level of ordinary statistics (bosons and fermions), all of them have been criticized for their \emph{ad hoc} nature~\cite{Messiah64,mirman1973,vanenk2019}. This leaves the door open for various generalizations, many of which resort to somewhat arbitrary assumptions added to the quantum formalism. Earliest work along these lines dates back to Gentile and his attempt to interpolate between two statistics~\cite{Gentile40}, and since then, we have seen dozens of generalized and exotic statistics, such as parastatistics~\cite{Green53}, quons and intermediate statistics~\cite{quons,Greenberg99,LAVAGNO10,Fivel90}, infinite statistics~\cite{Greenberg90,Medvedev97}, generalizations of fractal and topology-dependent statistics~\cite{CHEN_1996,Polychronakos99,cattani09,Surya_2004}, ewkons~\cite{Hoyuelos16} modifications of statistics due to quantum gravity~\cite{SWAIN_2008,balachandran_2001,baez06}, non-commutative geometry~\cite{arzano09} and others~\cite{Maslov2009,trifonov09,Bagarello_2011,NIVEN09}.
%%%%%%%%%%%%%%%%%%%%%%%%%%%%%%%%%%%%
\subsection{Operational approach and particle statistics}
So far, exotic statistics have never been observed in nature. This situation can be interpreted at least in two ways: we need more sophisticated and precise experiments, or (some) generalizations are in collision with basic laws of physics (believed to hold universally). An excellent example of the latter point is a question of the parity superselection rule (PSR) for fermions derived from the impossibility
of discriminating a $2\pi$-rotation from the identity in three-dimensional space~\cite{Wightman1995}. One may wonder how to apply this reasoning in a more abstract scenario, such as fermions occupying some discrete degrees of freedom (e.g., energy) where no notion of rotation (\emph{a priori}) exists. An elegant study was provided in a recent work~\cite{johansson2016} based on techniques from quantum information, showing that a PSR violation would allow for superluminal communication. Thus, the parity superselection rule can be derived from a more basic law, i.e., the \emph{no signaling} principle~\cite{ghirardi2013}. Such an approach to physical theories (from physical laws to mathematical formalism) resembles Einstein's original presentation of special relativity. In that case, a concise set of physical postulates, namely the covariance of physical laws and the constancy of the speed of light in all frames of reference, paved the way for the formalism of Lorentz transformations.
In the realm of quantum foundations, the application of this methodology was particularly successful. With the pioneering work of Hardy~\cite{hardy2001}, the field of \emph{operational reconstructions} of quantum theory~\cite{hardy2001, dakic11, chiribella2011, Masanes_2011, Dakic2016, Hohn17} was established where one recovers the abstract machinery of Hilbert spaces starting from a set of information-theoretic axioms. Considering the significance of identical particles in quantum information processing (such as in linear optical quantum computing~\cite{knill2001scheme}), it becomes evident that utilizing this operational approach holds significant potential to derive particle statistics based on physically grounded assumptions.
Rather than exploring possible modifications of the existing formalism, a more constructive approach may begin by defining a typical quantum experiment and addressing straightforward physical questions. For example, how do we define particle (in)distinguishability from an experimental standpoint? Is it possible to establish a clear operational differentiation between various types of identical particles, and if so, how do we characterize the corresponding mathematical formalism? Our work can be understood as an attempt to answer these questions. Along these lines, promising research studies appeared in the context of the symmetrization postulate~\cite{Goyal_2019}, anyonic statistics~\cite{neori16}, quantum field theory~\cite{DAriano14,eon2022} and identical particles in the framework of generalized probabilistic theories~\cite{D_Ariano_2014,dahlsten2013}. 
%%%%%%%%%%%%%%%%%%%%%%%%%%%%%
\subsection{Reconstruction, mathematical foundations and summary of the results}

Following the instrumentalist approach of Hardy~\cite{hardy2001}, we study identical quantum particles in an operationally well-defined setup composed of laboratory primitives, such as preparations, transformations, and measurements (see Fig. \ref{setup}). Our starting point is a single quantum particle which we assume is an ordinary quantum particle described by standard formalism and unitary dynamics. This appears rather natural, as a single quantum particle is insensitive to statistics. We introduce a typical apparatus for a single-particle transformation described by a unitary channel on $d$-modes ($d\times d$ unitary matrix) and a set of $d$ detectors at the output. For such a fixed circuit,  we investigate the scenario with multiple identical particles at the input and analyze the probability of detecting them after the transformation. Detectors can register only particle numbers but cannot distinguish them; thus, 
 indistinguishability is built in from the beginning. As we shall see, the Fock-space structure will naturally arise as an ambient space for multi-particle states. Two central mathematical ingredients will thus figure prominently in our reconstruction of particle statistics: 
\begin{itemize}
    \item[1)] unitary group $U(d)$ describing single-particle transformations, and 
    \item[2)] the Fock space structure encompassing multi-particle states.
\end{itemize}

Paired with the locality assumption (i.e., phase transformations acting locally in Fock space), these two elements will determine how particles are organized in multiparticle states. Mathematically, the problem concerns the classification of representations of the $U(d)$ group in Fock space subjected to locality constraint. We found a one-to-one correspondence to the well-studied mathematical problem of characterizing completely-positive sequences~\cite{bump,Schoenberg,Davydov,integer}. This, in turn, provided us with a \emph{complete categorization of particle statistics} based on integral polynomials. To be more precise, a list of integers
\begin{equation}
 [q_0,q_1,\dots]_{\pm},   
\end{equation}
defines a type of particle statistics, provided that $Q_{\pm}(x)=\sum_{s}^{}(\mp1)^sq_sx^s$ are polynomials with all negative ($+$) or positive ($-$) roots. We coin the term \emph{transtatistics} for this generalized statistics. This is a natural generalization of ordinary statistics into two types: fermionic-like $[\dots]_{-}$ (\emph{transfermions}) and bosonic-like $[\dots]_{+}$ (\emph{transbosons}), and to the best of our knowledge, was not presented in the literature. Ordinary statistics is the simplest possibility (degree-one) $[1,1]_{\pm}$ with multiparticle Fock state being completely specified by irreducible representations (IR) of $U(d)$. On the other hand, general transtatistics requires additional quantum numbers to identify states of indistinguishable particles; thus, \emph{hidden symmetries}~~\cite{mcintosh1959accidental} and \emph{new degrees of freedom} emerge exclusively from these types of particles. We discuss further physical consequences by analyzing the thermodynamics of non-interacting gases. In doing so, we find an interesting inclusive degeneracy of ground-states followed by \emph{spontaneous symmetry breaking}~\cite{peierls1991spontaneously}, which (usually) does not exist in ordinary statistics. 

Symmetry is central to our reconstructions. In particular, the $U(d)$ symmetry of single-particle transformations is uniquely related to ordinary statistics and transtatistics. This also brings the main difference to other generalized statistics, which rely on different symmetries. Apart from the foundational relevance, our findings apply to quantum information and quantum many-body physics. Concretely speaking, transtatistics brings novel theoretical models for non-interacting identical particles. The latter is relevant for studying strongly-correlated quantum systems (see ~\cite{Verstraete21} and references therein), many of which are reducible to non-interacting models of indistinguishable particles~\cite{lieb61}. Therefore, one may find new integrable models among strongly interacting quantum systems reducible to our non-interacting model. On the quantum information side, quantum statistics is essential in complexity theory and intermediate quantum computing models, such as in boson sampling~\cite{aaronson11}. In this respect, our classification is relevant as it may lead to the discovery of new intermediate computational models. These points are only summarized here and will be discussed in more detail in the last section of the manuscript. 

%%%%%%%%%%%%%%%%%%%%%%%%%%%%%%%%%
\section{Operational setup for indistinguishable particles}\label{Operational setup}
%%%%%%%%%%%%%%%%%%%%%%%%
\begin{figure}
    \centering
    \includegraphics[scale=0.6]{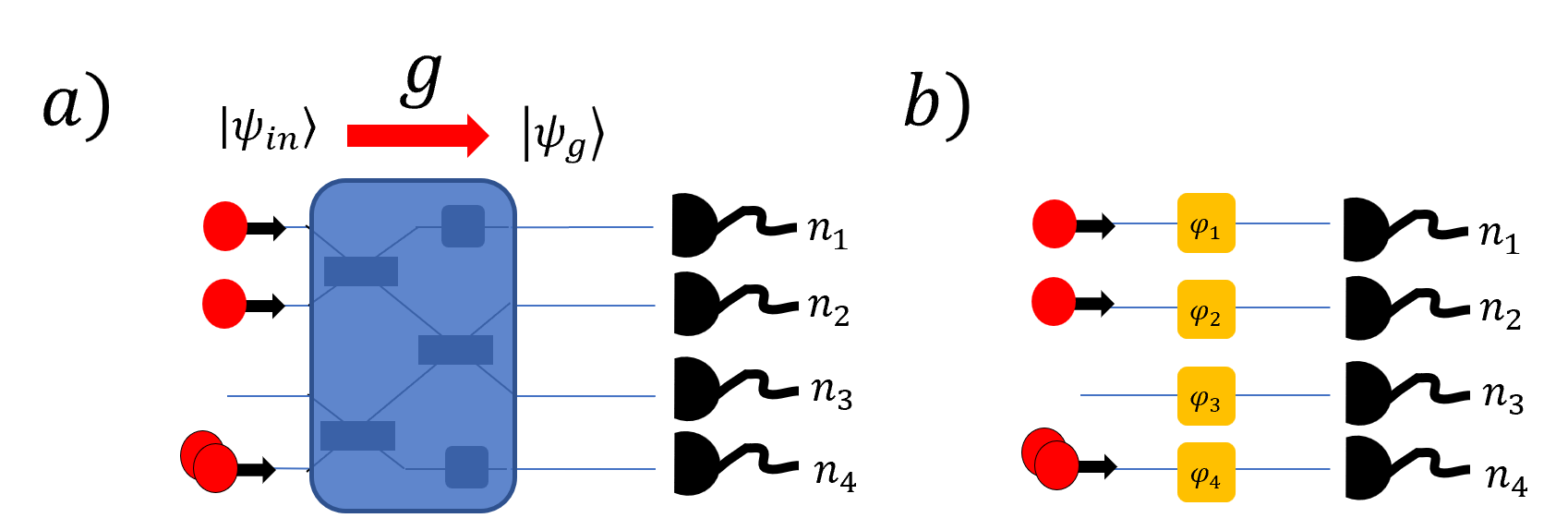}
    \caption{{\bf Operational setup}. a) Quantum circuit represented by $U(d)$-transformation for a single particle (example of $d=4$ is shown). For many particles injected into the setup, detectors can register their number only. This incorporates the notion of indistinguishability. b) Disconnected (independent) phase gates acting on particles locally, in individual modes.}
    \label{setup}
\end{figure}
%%%%%%%%%%%%%%%%%%%%%%%%
The operational framework for indistinguishable particles is illustrated in Fig. \ref{setup}$a)$. The apparatus consists of $d$ modes into which particles can be injected, followed by a transformation $g$ and a set of $d$ detectors that register particles after the transformation. The transformation $g$ is fixed and independent of particle number at the input and particle-statistics type. One can think of this transformation as a quantum circuit composed of elementary gates, such as beam-splitters and phase shifters used in quantum linear optics to produce a general unitary transformation $g\in U(d)$ on $d$ modes~\cite{Pan12}, where $U(d)$ is the set of $d\times d$ unitary matrices. As long as just one particle is injected into the setup, e.g., in mode $i$, the $j$th detectors will register the particle with the probability $p_j=|g_{ji}|^2$ with $g_{ji}$ being the matrix element of $g$. In other words, $g$ represents standard complex unitary dynamics of a single quantum particle with $d$ levels (modes). The critical question to be answered is what will happen if more than one particle is injected into such apparatus? To formalize the situation, there are three points to be addressed in the first place:
\begin{itemize}
    \item [i)] We shall determine the ambient Hilbert space describing the multi-particle system, 
    \item [ii)] we have to find the corresponding representation of transformations (as defined by the $U(d)$ group) in such a space, and
    \item [iii)] finally, determine the Born rule to calculate probabilities of detection events.  
\end{itemize} 
To identify the Hilbert space of many particles, we use the fact that particles are indistinguishable, i.e., detectors can register only particle numbers (how many particles land in a particular detector without distinguishing them). Thus the overall measurement outcome is described by a set of numbers $(n_1,n_2,\dots,n_d)$, with $n_k=0,1,2,\dots$. This outcome fully specifies the physical configuration; thus, we associate to it the measurement vector $\ket{n_1,\dots n_d}$ such that the Born rule gives detection probabilities  
\begin{equation}\label{detection probs}
    p_{n_1,\dots,n_d}=|\braket{n_1,\dots,n_d|\psi_g}|^2,
\end{equation}
where $\ket{\psi_g}$ is the state of the system after the transformation $g$. From here, we directly see that $\ket{\psi_g}\in\mathcal{F}_d$ resides in a Fock space defined as a span over number states, i.e.,  
\begin{equation}\label{Fock space}
\mathcal{F}_d=\mathrm{span}\{\ket{n_1,n_2,...,n_d}~|~n_k=0,1,2,\dots p\}.    
\end{equation}
Here $span$ denotes the complex linear span (hull) of basis vectors. Since outcomes $(n_1,n_2,\dots,n_d)$ are perfectly distinguishable, vectors $\ket{n_1,n_2,...,n_d}$ form an orthonormal set. We introduced the possibility of there being a maximal occupation number $p\in\mathbb{N}$, which is the \emph{generalized Pauli exclusion principle}. As we shall see, $p=1$ will correspond to fermionic statistics, while bosons are associated with the case $p=+\infty$. At this stage, $p$ is characteristic of statistics and is kept as an integer parameter (possibly infinite). Note that the Fock space in \eqref{Fock space} shall not be \emph{a priori} identified with the standard (textbook) Fock space constructed as a direct sum of particle sectors. Our Fock space is an ambient Hilbert space for multi-particle states naturally emerging from operational considerations and the measurement postulate defined in \eqref{detection probs}. Note also that the Fock space in \eqref{Fock space} is of the tensor product form, i.e., $\mathcal{F}_d=\mathcal{F}_1^{\otimes d}$.

Now, we shall find an appropriate unitary representation of $g\in U(d)$ in the ambient space $\mathcal{F}_d$, i.e. $\Delta_d: U(d)\mapsto GL(\mathcal{F})$ such that 
\begin{equation}\label{U(d) action}
    \ket{\psi_g}=\Delta_d(g)\ket{\psi_{in}},
\end{equation}
with $\Delta_d(g)$ being unitary representation and $\ket{\psi_{in}}\in\mathcal{F}_d$ is some input state to the circuit in Fig. \ref{setup}$a)$. For example, $\ket{\psi_{in}}=\ket{1,1,0,\dots,0}$ represents the input state of two particles injected in mode $1$ and $2$. In general, $\ket{\psi_{in}}$ may involve the superposition of number states. 
Representation $\Delta_d$ is reducible in general, and the \emph{group character} completely determines its decomposition into irreducible (IR) sectors~\cite{fulton}, that is, a function defined over the elements of the group
\begin{equation}\label{group character}
    \chi_d(g) =\mathrm{Tr}(\Delta_d(g)),\quad\forall g\in U(d).
\end{equation}  
As we shall see, the irreducible decomposition of Fock space \eqref{Fock space} will be in one-to-one correspondence to the type of particle statistics. So, the group character will be our main object of interest.
%%%%%%%%%%%%%%%%%%%%%%%%%%%%%%%%%%%
\subsection{Locality assumption}
To evaluate character on $U(d)$ group, recall that any unitary matrix can be diagonalized, i.e., $g=StS^{\dagger}$, with $t=\mathrm{diag}[x_1,\dots,x_d]\in T_d=U(1)\times\dots\times U(1)$ being an element of the maximal torus (also known as the phase group) with $x_k=e^{i\theta_k}\in U(1)$. Therefore, the character of $U(d)$ is entirely specified by the character evaluated on $T_d$, that is, $\chi_d(StS^{\dagger})=\mathrm{Tr}\Delta_d(StS^{\dagger})=\mathrm{Tr}\Delta_d(t)=\chi_d(t)$ (i.e., class function), thus it effectively becomes a function of phase variables, i.e., $\chi_d(\Vec{x})=\chi_d(x_1,\dots,x_d)$. 

Consider the case of a single-mode ($d=1$) with the Fock space $\mathcal{F}_1=\mathrm{span}\{\ket{n}|~n=0,1,\dots,p\}$ on which the group $U(1)$ acts with representation $\Delta_1(x)$, with $x=e^{i\theta}$. We can think of $\Delta_1(x)$ representing a simple device providing a phase shift to the state of a single particle placed in a mode. We can now consider the collection of $d$ such devices disconnected from each other and operating independently in separate modes, as illustrated in Fig. \ref{setup}$b)$. These transformations form the phase group $T_d$ acting in the entire Fock space, and given their operational independence, it appears natural to assume the following.
\begin{Assumption}[{\bf Locality}]\label{A0}\emph{The action of the phase group $T_d$ in Fock space is local, i.e.,
\begin{equation}\label{Torus action}
     \Delta_d(\vec{x})=\Delta_1(x_1)\otimes\dots\otimes\Delta_1(x_d), 
    \end{equation}
for $\Vec{x}\in T_d$.}
  \end{Assumption}
By taking the trace of the last equation, one gets
\begin{equation}\label{character}
    \chi_d(\vec{x})=\prod_{k=1}^d\chi_1(x_k),
\end{equation}
with $\chi_1(x)=\mathrm{Tr}(\Delta_1(x))$ being the single-mode character. One can also go in the reversed direction, i.e., starting with the character factorization in  \eqref{character}, we may derive the tensor factorization in \eqref{Torus action}, which follows from general character theory~\cite{fulton}.

Assumption \ref{A0} is our central assumption. We see that the single-mode character $\chi_1$, a function of a single variable, entirely specifies the character of the whole $U(d)$. The problem then simplifies, and our goal is to determine the most general form of $\chi_1(x)$ such that $\chi_d(\vec{x})$ in \eqref{character} is a valid character of $U(d)$. 
%%%%%%%%%%%%%%%%%%%%%%%%%%%%
\subsection{Generalized number operator and conserved quantities}\label{Generalized N number}
%%%%%%%%%%%%%%%%%%%%%%%%
What follows from Assumption \ref{A0} and factorization given in \eqref{character} is that the single-mode character $\chi_1(x)$ completely specifies the character of the whole $U(d)$ and consequently determines the decomposition of Fock space into IR sectors. Note that the action of the single-mode phase transformation $x=e^{i\theta}\in U(1)$ can be seen as an instance of the Hamiltonian evolution. Thus we can write $\theta=\epsilon t/\hbar$, where $\epsilon$ is the single-particle energy associated with this mode. 
%Without loss of generality, let's assume $\epsilon>0$ (the opposite case can be obtained via change of variables $\theta\rightarrow-\theta$). 
With this, the representation of the phase transformation becomes $\Delta_1(e^{i/\hbar\epsilon t})=e^{i/\hbar \hat{H}t}$, where $\hat{H}$ is the single-mode Hamiltonian (generator of phase). From the invariance under $(2\pi)$-rotations, i.e., $e^{i(\theta+2\pi)}=e^{i\theta}$, we conclude that all eigenvalues of $\hat{H}$ are integer multiples of $\epsilon$, that is, $\hat{H}=\epsilon \tilde{N}$ with $\Tilde{N}$ being the operator with integer eigenvalues. This defines the \emph{generalized number operator} or \emph{excitation operator} $\Tilde{N}$. Here, we will consider only the case $\tilde{N}\geq0$;  the possibility of negative eigenvalues of $\Tilde{N}$, which would account for the most general generator of $U(1)$, will be discussed later in Section \ref{Negative occupations}. Without loss of generality, we can assume $U(1)$ action to be number preserving, thus 
\begin{equation}\label{number operator}
\tilde{N}=\sum_{n=0}^pf_n\ket{n}\bra{n}, 
\end{equation}
with $f_n$ being non-negative integers. Note that $\tilde{N}$ is in general different from the standard number operator $\hat{N}=\sum_{n=0}^pn\ket{n}\bra{n}$. The two will coincide only if $f_n=n$, and as we shall see, this happens only in the case of ordinary statistics. 

Finally, we can write the single-mode character $\chi_1(e^{i\theta})=\mathrm{Tr}(e^{i\theta\tilde{N}})$ as 
\begin{equation}\label{single-mode character}
    \chi_{1}(x)=\sum_{s=0}^{+\infty}a_sx^s=x^{f_0}+x^{f_1}+\dots+x^{f_p},
\end{equation}
with $a_s$ being a non-negative integer. Mathematically speaking, the formula above is the decomposition of $\chi_1$ into irreducible representations of $U(1)$. For fermions, we have $\chi_1^{(-)}(x)=1+x$, while for bosons $\chi_1^{(+)}(x)=1+x+x^2+\dots=\frac{1}{1-x}$.

For the case of $d$ modes, the action of the phase group in \eqref{Torus action} becomes
\begin{equation}\label{Torus action N}
     \Delta_d(\vec{x})=e^{i\theta_1\tilde{N}_1+\dots+i\theta_d\tilde{N}_d}, 
    \end{equation}
where $\Tilde{N}_k=\openone^{\otimes (k-1)}\otimes\tilde{N}\otimes\openone^{\otimes (d-k)}$ are generators of local phases. 
The vector $\vec{x}=\theta(1,1,\dots,1)^T\in T_d$ corresponds to the scalar $d\times d$ matrix $e^{i\theta}\openone_d$ commuting with all $U(d)$ matrices, thus the operator
\begin{equation}\label{Casimir}
    \Tilde{N}=\sum_{k=1}^d\Tilde{N}_k,
\end{equation}
is a conserved quantity (Casimir operator) and represents the total number of excitations. We can also write \eqref{Torus action N} as being generated by the following Hamiltonian
\begin{equation}\label{Hamiltonian}
    \hat{H}=\sum_{k=1}^d\epsilon_k\Tilde{N}_k,
\end{equation}
where $\theta_k=\epsilon_k t/\hbar$ and $\epsilon_k$ is are the single-particle energies.

%%%%%%%%%%%%%%%%%%%%%%%%%%%%%%%%%%%%
\section{Particle statistics and its classification}
\subsection{On exchange symmetry}
In the $1$st quantization approach, particle statistics are classified via the exchange of particles and symmetrization postulate as given in equation \eqref{symm. postulate}. However, this method does not apply to the Fock-space approaches simply because there is no particle label (they are indistinguishable). A partial solution to this problem is to introduce \emph{permutation of modes operator}~\cite{stolt1970} 
\begin{equation}
    \Delta_d(\sigma)\ket{n_1,n_2,\dots,n_d}=\ket{n_{\sigma(1)},n_{\sigma(2)},\dots,n_{\sigma(d)}},
\end{equation}
for some permutation $\sigma\in S_d$ of $d$ elements. In this way, the permutation group acts in Fock space and plays the same role as the exchange of particles in the $1$st-quantized picture. For ordinary statistics, we have the usual sign change, i.e., $\Delta_d(\sigma)\ket{1,1,\dots,1}=(\pm)^{\sigma}\ket{1,1,\dots,1}$, where $(..)^{\sigma}$ denotes the parity of permutation ($+1$ for bosons and $(-1)^{\sigma}$ for fermions). Nevertheless, the permutation of modes is only a discrete subgroup of the group of single-particle transformation, thus insufficient for the whole physical picture. For example, in our case, it is the subgroup of the unitary group, i.e., $S_d<U(d)$. But it can also be a subgroup of some other group, such as an orthogonal group, in which case one gets parastatistics~\cite{ryan1963, Stoilova_2008}. Therefore, to fully understand how different types of particles integrate into multi-particle states in Fock space, one must study transformation properties under the action of the whole group of single-particle transformations. This work concerns $U(d)$ as our premise is that standard unitary quantum mechanics governs the physics of one particle.  
\subsection{Physical consequences}
%%%%%%%%%%%%%%%%%%%%%%%%
\begin{figure}
    \centering
    \includegraphics[scale=0.5]{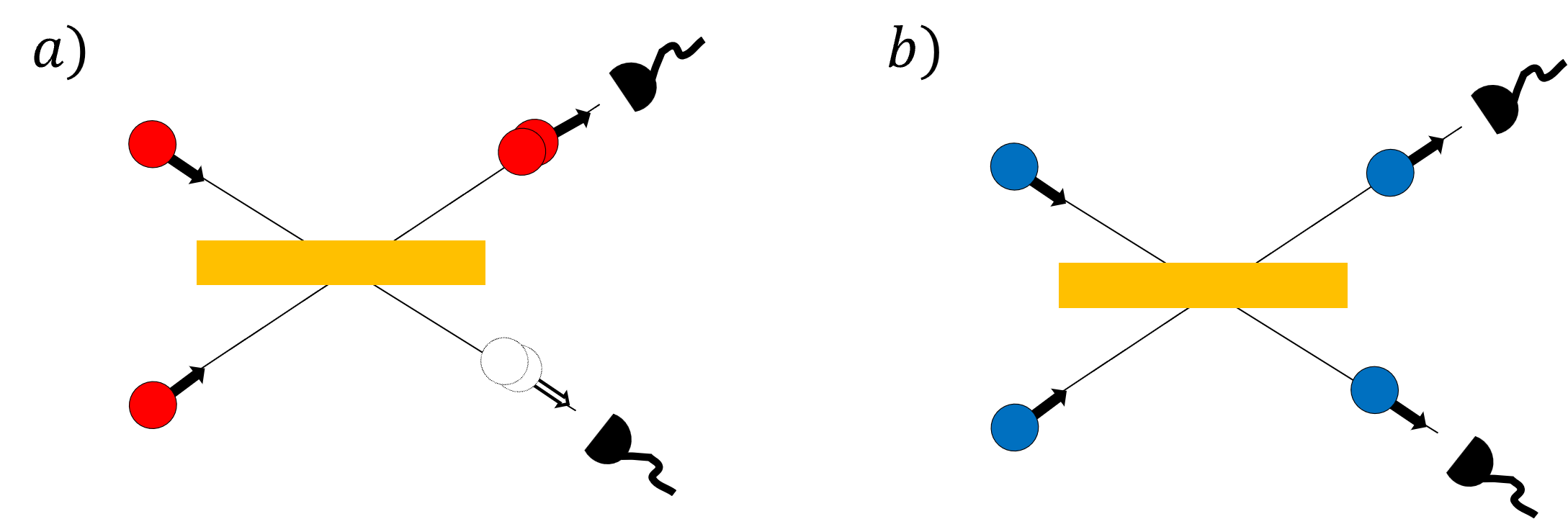}
    \caption{{\bf Hong-Ou-Mandel effect~\cite{HOM}}. a) Boson bunching, and b) fermion antibunching. See main text for details.}
    \label{HOM}
\end{figure}
%%%%%%%%%%%%%%%%%%%%%%%%
To illustrate how single-particle transformations affect the physical behavior of indistinguishable particles, take an example of two particles entering the $50/50$ beamsplitter (BS) at different ports (modes), as shown in Fig. \ref{HOM}. The beam-splitter is defined via unitary matrix $u_{bs}=\frac{1}{\sqrt{2}}\begin{pmatrix}
  1 & 1\\ 
  1 & -1
\end{pmatrix}$. 
Now, if particles are bosons, then the input state is $\ket{1,1}=a_1^{\dagger}a_2^{\dagger}\ket{0,0}$, where $a_{1(2)}^{\dagger}$ are bosonic ladder operators associated to two different modes (ports of BS). The output state (after BS) is given by $\frac{1}{2}(a_1^{\dagger}+a_2^{\dagger})(a_1^{\dagger}-a_2^{\dagger})\ket{0,0}=\frac{1}{\sqrt{2}}(\ket{2,0}-\ket{0,2})$. We see that bosons exit 
the BS bunched together, and this is the well-known Hong-Ou-Mandel effect~\cite{HOM}. In contrast, if particles were fermions, the calculation remains the same but with fermionic ladder operators $a_{1/2}^{\dagger}$, thus we have the output state $\frac{1}{2}(a_1^{\dagger}+a_2^{\dagger})(a_1^{\dagger}-a_2^{\dagger})\ket{0,0}=-a_1^{\dagger}a_2^{\dagger}\ket{0,0}=-\ket{1,1}$. This means that fermions exit the BS antibunched (in different ports). These two complementary behaviors can be deduced from the decomposition of the Fock space \eqref{Fock space} into IR sectors of the $U(2)$ group and action of the $u_{bs}$ element. In the case of two bosons, $U(2)$ reduces into three-dimensional subspace $\mathrm{span}\{\ket{2,0},\ket{1,1},\ket{2,0}\}$ (bosonic IR) which encompasses the bunching effect. For two fermions, we have the one-dimensional IR spanned by $\{\ket{1,1}\}$ (fermionic IR), directly resulting in fermionic antibunching.
%%%%%%%%%%%%%%%%%%%%%%%%%%%%%%%%%%%%
\subsection{Particle statistics}
As explained at the beginning of this section, the group of single-particle transformations determines the physical behavior of non-interacting indistinguishable particles, and different types of particle statistics arise due to the Fock space's $U(d)$-IR decomposition. Therefore, what we mean by classification of particle statistics is a \emph{classification of all possible ways} the Fock space \eqref{Fock space} decomposes into IR sectors, i.e.
\begin{equation}\label{Fock space IR decomposition}
    \mathcal{F}_d=\bigoplus_{\lambda}c_{\lambda}\mathcal{V}_{\lambda},
\end{equation}
where $\mathcal{V}_{\lambda}$ is an $U(d)$-IR. These are indexed~\cite{fulton} by a partition (Young diagram) $\lambda=(\lambda_1,\dots \lambda_d)$ with $\lambda_1\geq\dots\geq\lambda_d$, and $c_{\lambda}\in\mathbb{N}_0$ is the frequency of the IR. Now, recall that the character of a representation completely determines its decomposition into IR sectors. A well-known fact from representation theory is that IRs of $U(d)$ have \emph{Schur polynomials} $s_\lambda(\vec{x})$ as characters (see Appendix \ref{Schur polynomails} for definition)~\footnote{Note that this holds for polynomial representations of $U(d)$, which is our case here, as the $U(1)$ generator satisfies $\tilde{N}\geq0$. In the most general case, irreducible representations are rational and captured by Laurent-Schur polynomials~\cite{stanley_fomin_1999}. More details are provided in the Section \ref{Negative occupations}.}. Thus, equation \eqref{Fock space IR decomposition} translates to decomposition of character \eqref{character} into Schur-polynomials, i.e.
\begin{equation}\label{character decomposition}
  \prod_{k=1}^d\chi_1(x_k)=
  %(\prod_{k=1}^d\sum_{n_k=-\infty}^{+\infty}c_{n_k}x_k^{n_k}=
  \sum_\lambda c_\lambda s_\lambda(\vec{x}),~~c_{\lambda}\in\mathbb{N}_0.  
\end{equation}
We see that the single-mode character $\chi_1(x)$ \emph{completely specifies particle statistics} (in the sense of definition \eqref{Fock space IR decomposition}) and this is a direct consequence of our locality assumption \ref{A0}. 
%Next, we shall determine the sequence $\lbrace c_n \rbrace_{n\iN}_0n \mathbb{Z}}$ given in the expansion of the single-mode character \eqref{single-mode character} for which the overall product given in \eqref{character} is a proper character of $U(d)$. 

To clarify the point, we provide examples of bosonic and fermionic statistics. For fermions, the maximal occupation number is $p=1$, thus the single-mode character in \eqref{single-mode character} reduces to $\chi_1^{(-)}(x)=1+x$. For $d$-modes, character \eqref{character} can be expanded as 
\begin{eqnarray}\label{fermionic character}
 \chi_d^{(-)}(\vec{x})&=&\prod_{k=1}^d(1+x_k)=1+(x_1+\dots+x_d)+\\\nonumber
&+&(x_1x_2+\dots+x_{d-1}x_d)+\dots+x_1x_1\dots x_d.
\end{eqnarray}
Written in terms of Schur-polynomials, this equation reads
\begin{eqnarray}\label{fermionic Schur exp}\nonumber
 \chi_d^{(-)}(\vec{x})&=&s_{(0,0,\dots,0)}(\vec{x})+s_{(1,0,\dots,0)}(\vec{x})+s_{(1,1,\dots,0)}(\vec{x})+\dots\\
 &+&s_{(1,1\dots,1)}(\vec{x}).
\end{eqnarray}
This expansion corresponds to the decomposition of the Fock space $\mathcal{F}_d=\bigoplus_{N=1}^d\mathcal{V}_{-}^{(N)}$ into fermionic irreducible subspaces $\mathcal{V}_{-}^{(N)}$ associated with particle sectors. 

Similarly, for the case of bosons and $p=+\infty$, equation \eqref{single-mode character} reads $\chi_1^{(+)}(x)=1+x+x^2+\dots=\frac{1}{1-x}$. For $d$ modes, \eqref{character} reads
\begin{eqnarray}\label{bosonic character}
 \chi_d^{(+)}(\vec{x})&=&\prod_{k=1}^d\frac{1}{1-x_k}=1+(x_1+\dots+x_d)\\\nonumber
&+&(x_1^2+x_1x_2+x_2^2+\dots+x_{d-1}x_d+x_d^2)\\\nonumber
&+&(x_1^3+x_1^2x_2+x_1x_2^2+x_2^3+\dots+x_{d-1}x_d^2+x_d^3)\dots
\end{eqnarray}
or written in terms of bosonic Schur polynomials
\begin{eqnarray}\label{bosonic Schur exp}\nonumber
 \chi_d^{(+)}(\vec{x})&=&s_{(0,0,\dots,0)}(\vec{x})+s_{(1,0,\dots,0)}(\vec{x})+s_{(2,0,\dots,0)}(\vec{x})\\
 &+&s_{(3,0,\dots,0)}(\vec{x})+\dots
\end{eqnarray}
Again, this corresponds to the decomposition of the Fock space $\mathcal{F}_d=\bigoplus_{N=1}^{+\infty}\mathcal{V}_{+}^{(N)}$ into bosonic irreducible subspaces $\mathcal{V}_{+}^{(N)}$ associated with particle sectors. 

An important remark is in order about the single-particle sector $\mathcal{F}_d^{(1)}=\mathrm{span}\{\ket{n_1,n_2,...,n_d}~|~\sum_k n_k=1\}$ which is $d$-dimensional. This subspace is associated with the character $s_{(1,0,\dots,0)}(\Vec{x})=x_1+\dots+x_d$, same for bosons and fermions, i.e. we have $\mathcal{F}_d^{(1)}=\mathcal{V}_{+}^{(1)}=\mathcal{V}_{-}^{(1)}$. This is consistent with the fact that the quantum physics of one particle is insensitive to the type of statistics. This also agrees with our operational setup in Fig. \ref{setup}$a)$, which was defined through $d\times d$ unitary matrices acting in the space of one particle. Such representation is called the \emph{standard} or \emph{defining} representation.    
%%%%%%%%%%%%%%%%%%%%%%%%%%%%%%%%%%%%
\subsection{Partition theorem and general statistics}
Generally, not all $U(1)$-characters $\chi_{1}(x)=\sum_{s\in\mathbb{N}_0}a_sx^s$ induce a valid $U(d)$-character in \eqref{character}. To see this, take a simple example of $\chi_1(x)=1+x^2$. For two-modes equation \eqref{character} reads $(1+x_1^2)(1+x_2^2)=s_{(0,0)}(x_1,x_2)+s_{(2,0)}(x_1,x_2)+s_{(2,2)}(x_1,x_2)-s_{(1,1)}(x_1,x_2)$ and we have a negative expansion coefficient $c_{(1,1)}<0$, which contradicts $c_{\lambda}\geq0$ in equation \eqref{Fock space IR decomposition}. 

Next, suppose that the first $k$ coefficients in the single-mode character expansion are zero. Then, we can always write $\chi_1(x)=\sum_{s=k}^{+\infty}a_sx^s=x^k \sum_{s=0}^{+\infty}a_{s+k}x^s=x^k\tilde{\chi}_1(x)$. For the general $d$ mode character in \eqref{character}, we will have 
\begin{equation}
    \chi_d(\vec{x})=(x_1\dots x_d)^k\tilde{\chi}_d(\vec{x}).
\end{equation}
The term $(x_1\dots x_d)^k=(\det g)^k$ equals determinant of a unitary matrix $g$ with eigenvalues $x_1,\dots x_d$. From here, we recognize that $\chi_d$ and $\tilde{\chi}_d$ are equivalent up-to-determinant. Therefore, without loss of generality, we will assume $a_0>0$. 

The problem of classifying all single-mode characters that induce valid representation of $U(d)$ involves non-trivial mathematics. Luckily, we found an equivalent formulation to the well-studied combinatorial problem of characterizing completely-positive sequences~\cite{bump, Schoenberg,Davydov, integer}. Details are provided in the Appendix \ref{main proof} together with the proof of our main theorem: 
\begin{theorem}[Partition]\label{Partition theorem}
For $\chi_1(x)=\sum_{s\in\mathbb{N}_0}a_sx^s$ with $a_0>0$, a symmetric function $\prod_{k=1}^d \chi_1(x_k)$ is a $U(d)$ character for all $d\in\mathbb{N}$ if and only if the generating function is of the form
\begin{equation}\label{gen_decompostion}
\chi_1(x)=\frac{Q_{-}(x)}{Q_+(x)},
 \end{equation}
where $Q_{\pm}(x)$ is an integral polynomial with all positive (negative) roots.  Furthermore $Q_+(0)=1$.
\end{theorem}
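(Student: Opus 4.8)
The plan is to prove both implications by translating the defining requirement — that the symmetric function $\prod_{k=1}^d\chi_1(x_k)$ expand into Schur polynomials with coefficients $c_\lambda\in\mathbb{N}_0$ for \emph{every} $d$ — into a total-positivity condition on the integer sequence $(a_s)$, and then to invoke the classical theory of totally (completely) positive sequences cited in the text. The first thing I would establish is an explicit formula for the Schur coefficients. Since the complete homogeneous symmetric functions $h_n$ freely generate the ring of symmetric functions, the assignment $h_n\mapsto a_n$ extends to a ring homomorphism $\phi$; applying $\phi$ in the second family of variables to the Cauchy identity $\prod_{k,i}(1-x_ky_i)^{-1}=\sum_\lambda s_\lambda(\vec{x})\,s_\lambda(\vec{y})$ and using $\sum_n h_n(\vec{y})x^n=\prod_i(1-x y_i)^{-1}$ yields
\begin{equation*}
\prod_{k=1}^d\chi_1(x_k)=\sum_\lambda \phi(s_\lambda)\,s_\lambda(\vec{x}),\qquad c_\lambda=\phi(s_\lambda)=\det\!\big(a_{\lambda_i-i+j}\big)_{i,j=1}^{d},
\end{equation*}
the last equality being the Jacobi--Trudi identity applied to the specialized sequence. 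One checks this against the examples in the text: for bosons ($a_s\equiv1$) the determinant is $1$ exactly on single rows and $0$ otherwise, and $c_{(1,1)}=0$, reproducing \eqref{bosonic Schur exp}.

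With this formula the character condition becomes transparent: $\det(a_{\lambda_i-i+j})$ is precisely a minor of the one-sided Toeplitz matrix $(a_{j-i})_{i,j}$, so demanding $c_\lambda\ge0$ for all partitions $\lambda$ and all $d$ is exactly the statement that every minor of that matrix is non-negative, i.e.\ that $(a_s)$ is a totally (completely) positive sequence. The quantifier ``for all $d$'' is what upgrades finite-order positivity to the full property. Integrality $c_\lambda\in\mathbb{Z}$ is automatic from $a_s\in\mathbb{N}_0$, so the whole problem reduces to classifying the integer totally positive sequences with $a_0>0$.

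For sufficiency I would start from $\chi_1=Q_-/Q_+$. The root hypotheses allow the factorizations $Q_-(x)=a_0\prod_i(1+\beta_i x)$ and $Q_+(x)=\prod_j(1-\alpha_j x)$ with $\alpha_j,\beta_i>0$ and $Q_+(0)=1$. Each factor expands Schur-positively, namely $\prod_k(1+\beta_i x_k)=\sum_m\beta_i^m\,s_{(1^m)}(\vec{x})$ and $\prod_k(1-\alpha_j x_k)^{-1}=\sum_m\alpha_j^m\,s_{(m)}(\vec{x})$, and products of Schur-positive functions stay Schur-positive because the Littlewood--Richardson coefficients are non-negative. Hence $\prod_k\chi_1(x_k)$ has $c_\lambda\ge0$; combined with $c_\lambda=\det(a_{\lambda_i-i+j})\in\mathbb{Z}$ this gives $c_\lambda\in\mathbb{N}_0$, a genuine $U(d)$ character.

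The substantial direction is necessity, and here I would appeal to the Aissen--Schoenberg--Whitney/Edrei classification of one-sided totally positive sequences: any such generating function must take the form $\chi_1(x)=a_0\,e^{\gamma x}\prod_i(1+\beta_i x)/\prod_j(1-\alpha_j x)$ with $\gamma,\alpha_j,\beta_i\ge0$ and $\sum(\alpha_j+\beta_i)<\infty$. The remaining — and in my view most delicate — step is to show that \emph{integrality} of the coefficients collapses this real-analytic form to a finite rational function with integral numerator and denominator. The key input is the dichotomy that a power series with integer coefficients is either a polynomial (radius of convergence $\infty$) or has radius of convergence at most $1$; applied to the Edrei form this forces $\gamma=0$ and only finitely many factors, after which $Q_-(x)=a_0\prod_i(1+\beta_i x)$ and $Q_+(x)=\prod_j(1-\alpha_j x)$ are honest polynomials with all negative, respectively positive, roots and $Q_+(0)=1$, while integrality of $(a_s)$ propagates to integrality of $Q_\pm$. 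I expect the careful bookkeeping of this finiteness/integrality reduction, rather than the symmetric-function algebra, to be the main obstacle, which is exactly why leaning on the cited results on completely positive integer sequences is advantageous.
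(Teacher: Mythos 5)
Your strategy is the same as the paper's (Schur coefficients as Toeplitz minors, reduction to total positivity, then Aissen--Schoenberg--Whitney/Edrei plus integrality), and your sufficiency direction is sound. But the necessity direction, which you yourself identify as the substantial one, has two concrete gaps.

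First, the claim that ``$c_\lambda\ge 0$ for all partitions $\lambda$ and all $d$ is \emph{exactly} total positivity'' is not a tautology, and the implication you need is the nontrivial one. The determinants $c_\lambda=\det(a_{\lambda_i-i+j})$ are only those minors of the Toeplitz matrix whose \emph{columns are consecutive}; a general minor has column gaps and is of the two-partition form $\det(a_{\lambda_i-\mu_j-i+j})$ with $\mu\neq 0$ (this is the content of Proposition \ref{Proposition Bumb}). The quantifier ``for all $d$'' enlarges the minor size but does not create column gaps, so non-negativity of the $c_\lambda$ does not, by definition, give non-negativity of all minors, which is what the ASW theorem requires. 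The paper closes exactly this hole by multiplying $\chi_d$ by an arbitrary $s_\mu$ (Schur positivity is preserved by Littlewood--Richardson) so that the resulting coefficients realize \emph{all} Toeplitz minors $D^{\lambda,\mu}_{d-1}(\chi_1)$. In your own setup the fix is one line---$\det(a_{\lambda_i-\mu_j-i+j})=\phi(s_{\lambda/\mu})=\sum_\nu c^{\lambda}_{\mu\nu}\,c_\nu\ge 0$ by the skew Jacobi--Trudi identity and LR positivity---but as written that step is missing. (A smaller issue in the same part: your homomorphism forces $\phi(h_0)=1$, so the derivation of $c_\lambda=\det(a_{\lambda_i-i+j})$ literally applies only when $a_0=1$; for $a_0>1$ one needs an extra remark, e.g.\ that both sides are homogeneous of degree $d$ in the $a_s$, so the $a_0=1$ case rescales to the general one.)

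Second, the tool you name for collapsing the Edrei form to a rational function is too weak. The Fatou dichotomy (an integer-coefficient series is a polynomial or has radius of convergence $\le 1$) settles only the pole-free case $Q_+=1$: there the Edrei form is entire, hence must be a polynomial, forcing $\gamma=0$ and finitely many $\alpha_i$. As soon as denominator factors $(1-\beta_j x)^{-1}$ are present, the radius of convergence is automatically at most $1/\max_j\beta_j<\infty$, the dichotomy is vacuously satisfied, and it excludes neither $\gamma>0$ nor infinitely many factors. What is actually needed is a Carlson--P\'olya/Salem-type theorem: an integer-coefficient power series that is meromorphic on (a neighborhood of) the closed unit disc with finitely many poles there is rational. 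That is precisely what the paper's Lemma \ref{integral lemma} invokes via Salem's theorem. So your closing hedge of ``leaning on the cited results'' is not optional bookkeeping; the specific ``key input'' you propose would fail, and the Salem-type statement is the missing ingredient.
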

In other words, $Q_{\pm}(x)=c_{\pm}\prod_{i}(1\mp \alpha_i x)$ are polynomials with integer coefficients, where $\alpha_1>\alpha_2>\dots>0$, $c_{+}=1$ and $c_{-}\in\mathbb{N}$. From here, it follows that $Q_{\pm}(x)$ is a polynomial with all non-zero coefficients. 

Note that we are interested only in \emph{elementary statistics}, i.e., the Fock space cannot be factorized as a tensor product $\mathcal{F}=\mathcal{F}_1\otimes \mathcal{F}_2$, with $\mathcal{F}_{1/2}$ being associated with different particle types. Therefore, the character of elementary statistics cannot be factorized as $\chi_1(x)=\mu_1(x)\nu_1(x)$, with $\mu_1(x)$ and $\nu_1(x)$ being of the type \eqref{gen_decompostion}. Thus equation \eqref{gen_decompostion} for elementary statistics is either $\chi_1=Q_{-}$ or $\chi_1=1/Q_{+}$. We conclude that statistics is of two kinds, i.e., \emph{fermionic-like} $[\dots]_{-}$ and \emph{bosonic-like} $[\dots]_{+}$ specified by
\begin{equation}\label{statistics}
 Q_{\pm}(x)=\sum_{s=0}^{\mathrm{deg}[Q_{\pm}]}(\mp1)^sq_sx^s:=[q_0,q_1,\dots]_{\pm},~~q_s\in\mathbb{N},   
\end{equation}
with $Q_{\pm}(x)$ being irreducible polynomials over integers satisfying conditions in \eqref{gen_decompostion}. The corresponding single-mode characters are $Q_{-}(x)$ and $1/Q_{+}(x)$, respectively. This classification naturally generalizes ordinary statistics, and we term it \emph{transtatistics} with two possible types: \emph{transfermions} (type $[\dots]_{-}$) and \emph{transbosons} (type $[\dots]_{+}$). Here $\mathrm{deg}[Q_{\pm}]$ is the degree of $Q_{\pm}(x)$ to which we also refer as \emph{order of statistics}. Order $0$ is a trivial case, thus we assume $\mathrm{deg}[Q_{\pm}]\geq1$. For $[\dots]_{-}$ statistics, the generalized Pauli principle applies with $p=Q_{-}(1)-1<+\infty$ being the maximal number of particles per mode, while for $[\dots]_{+}$ we have $p=+\infty$. From now on, we shall use the label  $[q_0,q_1,\dots]_{\pm}$ to refer to a particular type of particle statistics.

Note that one can find the eigenvalues of the excitation operator $\Tilde{N}$ defined in \eqref{number operator} by solving the following equation
\begin{equation}\label{f-numbers equation}
    x^{f_0}+x^{f_1}+\dots+x^{f_p}=\left(Q_{\pm}(x)\right)^{\mp1}.
\end{equation}
%%%%%%%%%%%%%%%%%%%%%%%%%%%%%%%%%%%%%%%%%%%%%%%%%%%%%%%%%%%%%%%
\section{Irreducible particle sectors: bosons and fermions}\label{bosons and fermions}
Ordinary statistics is order-one statistics of the type $[1,1]_{\pm}$. To answer what makes bosons and fermions special in the whole family of generalized statistics classified in \eqref{statistics}, we introduce the following assumptions:
\begin{Assumption}[{\bf Irreducibility}]\label{A1} \emph{All symmetries of the system of indistinguishable particles are determined by the $U(d)$ group.}
\end{Assumption}
Assumption \ref{A1} essentially states that the Fock space decomposes into $U(d)$-IR sectors without multiplicity; thus, no additional symmetries (conserved quantities) are present in the system. We show now that only ordinary statistics has this property. 

We start with a general single-mode character $\chi_1(x)=\sum_{s=0}^{+\infty}a_sx^s$. Character equation \eqref{character} for $d$-modes can be expanded as follows
\begin{eqnarray}\label{Schur-decomposition}
    \chi_d(\Vec{x})&=&a_0^d+a_0^{d-1}a_1(x_1+\dots+x_d)+W(\Vec{x})\\\nonumber
    &=&a_0^ds_{(0,0,\dots,0)}(\vec{x})+a_0^{d-1}a_1s_{(1,0,\dots,0)}(\Vec{x})+W(\Vec{x}),
\end{eqnarray}
where $W(\Vec{x})$ is the symmetric function that contains quadratic and higher-order terms in variables $\vec{x}=(x_1,\dots,x_d)^T$. Since Schur polynomials of degree $l$ form the basis in the space of $l$-degree symmetric polynomials, the constant and linear terms in the equation \eqref{Schur-decomposition} are already IR-decomposed. Because assumption \ref{A1} requests no multiplicities, we have $a_0=0,1$ and $a_1=0,1$.   

Now we turn to concrete cases. For transfermions, the single-mode character reads
\begin{eqnarray}\label{transF character}
    Q_{-}(x)&=&c_{-}\prod_i(1+\alpha_ix)\\\nonumber
    &=&c_{-}+c_{-}\left(\sum_i\alpha_i\right) x+c_{-}\left(\sum_{i< j}\alpha_i\alpha_j\right)x^2+\dots\\\nonumber
    &=&a_0+a_1x+a_2x^2+\dots 
\end{eqnarray}
    with $\alpha_1>\alpha_2>\dots>0$ and $c_{-}\in\mathbb{N}$.  
This is consistent with the previous analysis of \eqref{Schur-decomposition} only if $c_{-}=a_0=1$ and $\sum_i\alpha_i=a_1=1$. For the quadratic coefficient in \eqref{transF character} we have $a_2=\sum_{i< j}\alpha_i\alpha_j=\frac{1}{2}(\sum_i\alpha_i)^2-\frac{1}{2}\sum_i\alpha_i^2=\frac{1}{2}-\frac{1}{2}\sum_i\alpha_i^2\in\mathbb{N}_0$ because $Q_{-}$ is an integral polynomial. This is possible only if  $\alpha_1=1$ and $\alpha_2=\dots=0$. Thus, we recover the fermionic character $\chi_1(x)=1+x$. 

For the case of transbosons, we have the single-mode character 
\begin{eqnarray}\label{transB character}
    1/Q_{+}(x)&=&1/\prod_i(1-\alpha_ix)\\\nonumber
    &=&1+\left(\sum_i\alpha_i\right) x+\left(\sum_i\alpha_i^2+\sum_{i< j}\alpha_i\alpha_j\right)x^2+\dots\\\nonumber
    &=&a_0+a_1x+a_2x^2+\dots 
\end{eqnarray}
By the same analysis as for transfermions, we conclude $\sum_i\alpha_i=1$. For the quadratic term in \eqref{transB character}, we have $a_2=\sum_i\alpha_i^2+\sum_{i< j}\alpha_i\alpha_j=\frac{1}{2}(\sum_i\alpha_i)^2+\frac{1}{2}\sum_i\alpha_i^2=\frac{1}{2}+\frac{1}{2}\sum_i\alpha_i^2\in\mathbb{N}_0$. Again, this is satisfied only if $\alpha_1=1$ and $\alpha_2=\dots=0$. Thus $\chi_1(x)=\frac{1}{1-x}$ and we recover the bosonic character. This concludes that only bosonic and fermionic statistics are consistent with the assumption \ref{A1}.

For ordinary statistics, the excitation operator in \eqref{number operator} coincides with the standard number operator. The Casimir operator in \eqref{Casimir} becomes the total number of particles which is a conserved quantity linked with $N$-particle sectors
\begin{equation}\label{particle sectors}
 \mathcal{F}_d^{(N)}=\mathrm{span}\{\ket{n_1,n_2,...,n_d}~|~\sum_k n_k=N\}.  
\end{equation} 
These are also $U(d)$-IR sectors associated with the standard bosonic (fermionic) subspaces $\mathcal{V}_{\pm}^{(N)}$.

It is worth pointing out that only in the case of ordinary statistics is the solution to the equation \eqref{f-numbers equation} for spectrum $f_n$ of the excitation operator $\Tilde{N}$ non-degenerate (in this case $f_n=n$). In all other cases, degeneracy necessarily appears. This follows from the fact that coefficients in the polynomial $Q_{\pm}(x)$ are all non-zero, and at least one of them is $2$ or greater (otherwise, all coefficients are equal to $1$, and we have ordinary statistics). Given this, at least one expansion coefficient on the right-hand side of \eqref{f-numbers equation} is $2$ or greater. Thus at least two $f_n$ numbers on the left-hand side of \eqref{f-numbers equation} are the same. However, it should be noted that these results are based on the assumption that $\Tilde{N}\geq0$. A generalization to the possibility of negative values of $\Tilde{N}$ is discussed in Section \ref{Negative occupations}.
%%%%%%%%%%%%%%%%%%%%%%%%
\section{Hidden symmetry and transtatistics}\label{order-one statistics}
We learned from the previous analysis that multiplicities in the Fock space decomposition \eqref{Fock space IR decomposition} will necessarily appear for all transtatistics apart from bosonic and fermionic. These multiplicities cannot be resolved without additional, so-called \emph{hidden symmetry}, present in the system~\cite{mcintosh1959accidental}. The latter is typically identified as a higher symmetry of the Hamiltonian required to fully resolve the degeneracy of the energy spectrum (sometimes called `accidental' degeneracy). The classic example is the degeneration of the spectrum of the hydrogen atom not captured by the rotational symmetry ($\mathrm{SO}(3)$ group) of the Hamiltonian but requires a higher (hidden) symmetry for resolution, which is the $\mathrm{SO}(4)$ group~\cite{pauli1926wasserstoffspektrum}. In our case, the situation is similar; the multiplicities in the Fock space decomposition are in one-to-one correspondence with the degeneration of the Hamiltonian $\hat{H}=\sum_{^k=1}^d\epsilon_k\Tilde{N}_k$ defined in \eqref{Hamiltonian} (generator of the $U(d)$ action). The total energy is given by
 \begin{equation}\label{energy spectrum}
     E=\sum_{k=1}^d\epsilon_k f_k,
 \end{equation}
 with $f_k$ being the eigenvalues of the excitation operator $\Tilde{N}$ defined in \eqref{number operator}. As long as this operator is non-degenerate, the energy spectrum $E$ is well-resolved with the set of quantum numbers $(f_{k_1},\dots,f_{k_d})$. Nevertheless, we have seen that this happens only in the case of ordinary statistics. For all other cases, degeneracy in spectrum $f_n$ necessarily appears, which is to be resolved by different quantum numbers unrelated to the $U(d)$ group. Without the specification of these numbers, the representation of $U(d)$ in Fock space remains unspecified, defined only up to IR-multiplicity. 

We will study these effects in detail for the first non-trivial case beyond ordinary statistics, i.e., the order-one statistics $[1,q]_{\pm}$, with $q\in\mathbb{N}$. To make the analysis more accessible, we will separate the notation for transbosons 
 (type $[1,\beta]_{+}$) and transfermions (type $[1,\alpha]_{-}$) with $\alpha,\beta\in\mathbb{N}$. The reason why we set the first coefficient in $[\dots]_{\pm}$ to be $1$ is because we will restrict our analysis only to the case of a unique vacuum state $\ket{0}^{\otimes d}$. To be more precise, we will study the cases in which the only invariant state under $U(d)$ is a vacuum state. This is possible only if the first coefficient in the single-mode character $\chi_1(x)=\sum_{s=0}^{+\infty}a_sx^s$ is set $a_0=1$ (see discussion around equation \eqref{Schur-decomposition}).   

To begin with, take an example of transfermions $[1,\alpha]_{-}$ with $\alpha=2$. In this case, the singe-mode Fock space is three-dimensional (follows from $\chi_1(x)=1+2x=1+2e^{i\theta}$), and the maximal occupation number is $p=\alpha=2$. For the case of two modes, the character reads $\chi_2(x_1,x_2)=1+2(x_1+x_2)+4x_1x_2=s_{(0,0)}(x_1,x_2)+2s_{(1,0)}(x_1,x_2)+2^2s_{(1,1)}(x_1,x_2)$. Thus, the Fock space decomposes into fermionic multiplets of the size $\alpha^N=2^N$, for $N=0,1,2$. This exponential growth of multiplicity is generic to order-one transtatistics. It is formalized in the following theorem (see Appendix \ref{proof of Fock space dec} for proof)

\begin{theorem}\label{Polarity}
Fock-spaces for $[1,\alpha]_{+}$ and $[1,\beta]_{-}$ decompose into IR sectors as 
\begin{align}\label{transF}
  \mathcal{F}_d^{}&=\bigoplus_{N=0}^d \alpha^N \mathcal{V}_{-}^{(N)},~~\alpha\in\mathbb{N},\\\label{transB}
  \mathcal{F}_d^{}&=\bigoplus_{N=0}^{+\infty} \beta^N \mathcal{V}_{+}^{(N)},~~\beta\in\mathbb{N},
\end{align}

where $\mathcal{V}_{-}^{(N)}$ and $\mathcal{V}_{+}^{(N)}$ are the fermionic and bosonic IRs ($N$-particle sectors for ordinary statistics), respectively. 
\end{theorem}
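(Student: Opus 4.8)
The plan is to work entirely at the level of characters. By \eqref{character decomposition} the multiplicities $c_\lambda$ in the Fock-space decomposition \eqref{Fock space IR decomposition} are fixed once the product character $\prod_{k=1}^{d}\chi_1(x_k)$ is expanded in the Schur basis $\{s_\lambda\}$; since the $s_\lambda$ (for partitions with at most $d$ parts) are linearly independent, it suffices to rewrite each product character as a manifest $\mathbb{N}_0$-combination of Schur polynomials and read off the coefficients. Each of the two cases then collapses to one classical generating-function identity for symmetric functions.

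For the transfermionic family $[1,\alpha]_{-}$ the single-mode character is $\chi_1(x)=1+\alpha x$, and I would expand
\begin{equation}
  \prod_{k=1}^{d}(1+\alpha x_k)=\sum_{N=0}^{d}\alpha^{N}e_N(\vec{x}),
\end{equation}
where $e_N$ is the $N$-th elementary symmetric polynomial, recovered as the sum of squarefree degree-$N$ monomials. Invoking the standard identity $e_N=s_{(1^N,0^{\,d-N})}$ — the single-column Schur polynomial — identifies each summand with the character of the fermionic $N$-particle sector $\mathcal{V}_{-}^{(N)}$, so that $c_{(1^N)}=\alpha^N$ and all other $c_\lambda$ vanish, which is \eqref{transF}. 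The series truncates at $N=d$ automatically, since $e_N\equiv0$ for $N>d$.

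For the transbosonic family $[1,\beta]_{+}$ the single-mode character is $\chi_1(x)=1/(1-\beta x)=\sum_{m\ge0}\beta^{m}x^{m}$, and the analogous computation gives
\begin{equation}
  \prod_{k=1}^{d}\frac{1}{1-\beta x_k}=\sum_{N=0}^{\infty}\beta^{N}h_N(\vec{x}),
\end{equation}
with $h_N$ the $N$-th complete homogeneous symmetric polynomial, obtained by expanding each geometric series and collecting terms of total degree $N$. The companion identity $h_N=s_{(N,0^{\,d-1})}$ identifies this with the character of the bosonic $N$-particle sector $\mathcal{V}_{+}^{(N)}$, giving $c_{(N)}=\beta^N$ and proving \eqref{transB}; here the sum is unbounded, consistent with $p=+\infty$. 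Both results specialize at $\alpha=\beta=1$ to the ordinary expansions \eqref{fermionic Schur exp} and \eqref{bosonic Schur exp}, a useful consistency check.

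Because the argument rests only on the framework \eqref{character decomposition} and the textbook identifications $s_{(1^N)}=e_N$, $s_{(N)}=h_N$, I do not expect a genuine obstacle; the proof is essentially a one-line substitution $t\mapsto\alpha$ or $t\mapsto\beta$ in the two generating functions $\prod_k(1+tx_k)=\sum_N e_N t^N$ and $\prod_k(1-tx_k)^{-1}=\sum_N h_N t^N$. The only points that demand care are the sign and labelling conventions of $[1,\alpha]_{-}$ versus $[1,\beta]_{+}$ — ensuring the geometric-series expansion carries the correct sign and that $\chi_1$ has the non-negative integer coefficients required by \eqref{single-mode character} — and checking that the multiplicities $\alpha^{N},\beta^{N}$ are non-negative integers, which holds since $\alpha,\beta\in\mathbb{N}$ and thereby re-confirms via Theorem~\ref{Partition theorem} that these are legitimate statistics.
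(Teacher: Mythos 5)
Your proof is correct and takes essentially the same route as the paper: the paper invokes the two Cauchy identities for Schur functions and specializes the second variable set to a single value $y_1=\alpha$ (resp.\ $y_1=\beta$), which reduces precisely to the generating functions $\prod_k(1+tx_k)=\sum_N e_N(\vec{x})\,t^N$ and $\prod_k(1-tx_k)^{-1}=\sum_N h_N(\vec{x})\,t^N$ that you use, with $s_{\lambda'}(\vec{x})s_\lambda(\alpha)=e_N(\vec{x})\alpha^N$ and $s_\lambda(\vec{x})s_\lambda(\beta)=h_N(\vec{x})\beta^N$ for single-row $\lambda=(N)$. Your identifications $e_N=s_{(1^N)}$ and $h_N=s_{(N)}$ then yield the multiplicities $\alpha^N$ and $\beta^N$ exactly as in the paper's proof.
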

 In the next section, we will build the concrete ansatz to identify auxiliary quantum numbers to resolve the degeneracy in \eqref{transF}-\eqref{transB}. Based on this, we will construct the $U(d)$ representation in Fock space. 
%%%%%%%%%%%%%%%%%%%%%%%%%%%%%%%
\subsection{Hidden quantum numbers}\label{hidden symmetry}

We start with transfermions $[1,\alpha]_{-}$ for some $\alpha\geq2$. For this case, the single-mode character reads $\chi_1(x)=1+\alpha x$ with $x=e^{i\theta}\in U(1)$. The single-mode Fock space is $(\alpha+1)$-dimensional $\mathcal{F}_1=\mathrm{span}\{\ket{n}~|~n=0,1,2,\dots ,p=\alpha\}$. Equation \eqref{f-numbers equation} reads 
\begin{equation}\label{fermi character equation}
    x^{f_0}+x^{f_1}+\dots+x^{f_p}=1+\alpha x,
\end{equation}
with the solution $f_0=0$ and $f_n=1$ for $n=1,\dots, \alpha$. The generator of $U(1)$ action $\Delta_1(e^{i\theta})=e^{i\theta\tilde{N}}$ is the excitation operator defined in \eqref{number operator} and in our case, it acts is as follows
\begin{equation}\label{N tansF}
    \Tilde{N}\ket{n}=\begin{cases} 
      0 & n = 0, \\
      +1\ket{n} & n=1,\dots, \alpha.
   \end{cases}
\end{equation}

Given this, one can re-interpret the single-mode states $\ket{n}$ for $n\geq1$ as \emph{de facto} being the single-particle excitations distinguished by some auxiliary degree of freedom with $\alpha$ values. Therefore, we can introduce decomposition $n=k+z$ with $k=0,1$ being the `real' occupation number of the fermionic type and $z=0,\dots \alpha^k-1$ as an auxiliary quantum number accounting for degeneracy. Having this, the formula \eqref{N tansF} takes the standard form, i.e $\Tilde{N}\ket{k+z}=k\ket{k+z}$. Now, to separate degrees of freedom captured by $k$ and $z$ quantum numbers, we introduce the mapping
\begin{equation}
    L_1\ket{k+z}=\begin{cases} 
      \ket{0}_F & k = 0, \\
      \ket{1}_F\otimes \ket{z}_A & k=1,
   \end{cases}
\end{equation}
where $\ket{k}_F$ is the ordinary fermionic number state with $k=0,1$, while $\ket{z}_A$ (with $z=0,\dots \alpha-1$) is a new degree of freedom emerged solely from the \emph{statistics type}. The ansatz straightforwardly generalizes to the $d$-mode Fock space. We define 
\begin{equation}\label{fermion mapping}
    L_d\ket{n_1,\dots,n_d}=\mathcal{T}L_1^{\otimes d}\ket{n_1,\dots,n_d},
\end{equation}
where $\mathcal{T}$ is the shift operator needed to separate degrees of freedom, i.e., to shift all auxiliary states to the right. For example, $\mathcal{T}\ket{k_1}\ket{z_1}\ket{k_2}\ket{z_2}=\ket{k_1,k_2}_F\otimes\ket{z_1,z_2}_A$. To fully clarify the mapping in \eqref{fermion mapping}, let $\ket{n_1,\dots, n_d}=\ket{k_1+z_1,\dots, k_d+z_d}$, where again, $k_s=0,1$ and $z_s=0,\dots,\alpha^{k_s}-1$. We form the ordered list $(z_{s_1},\dots,z_{s_N})$ for which $k_{s_r}=1$, i.e. the list of all non-zero fermionic excitations. Here $N=d-(\delta_{0,n_1}+\dots+\delta_{0,n_d})$ is the total number of them. Then, the equation \eqref{fermion mapping} reads 
\begin{equation}%\label{fermion mapping}
    L_d\ket{n_1,\dots,n_d}=\ket{k_1,\dots, k_d}_F\otimes \ket{z_{s_1},\dots,z_{s_N}}_A
\end{equation}
where $\ket{k_1,\dots, k_d}_F$ is the ordinary $N$-particle fermionic state, with the auxiliary label of particles $\ket{z_{s_1},\dots,z_{s_N}}_A$. This brings us precisely to the decomposition in \eqref{transF}, which can also be written as 
\begin{equation}\label{fermi space}
    \mathcal{F}_d=\bigoplus_{N=0}^d \mathcal{V}_{-}^{(N)}\otimes \mathcal{H}_A^{\otimes N},
\end{equation}
where $\mathcal{H}_A=\mathrm{span}\{\ket{z}|~z=0,\dots \alpha-1\}$ is the auxiliary space. Given this factorization, it is clear that $U(d)$ acts only in the fermionic part $\mathcal{V}_{-}^{(N)}$, while $\mathcal{H}_A^{\otimes N}$ remains untouched. The additional $U(\alpha)$ group acting in the space $\mathcal{H}_A^{(N)}$ can be added to resolve degeneracy completely. Now, for an element $g\in U(d)$, let the standard action on the fermionic number state is $\Delta_d^{(F)}(g)\ket{k_1,\dots, k_d}_F$. This induces the action $\Delta_d(g)$ in the Fock space \eqref{Fock space} as
\begin{equation}\label{fermi rep}
 \Delta_d(g)=L_d^{-1}\left(\Delta_d^{(F)}(g)\otimes \openone_A \right)L_d,   
\end{equation}
where $L_d$ is the mapping given in\eqref{fermion mapping}. With this, we have defined the action of $U(d)$ in the Fock space.

In complete analogy, we provide an ansatz for transbosons of $[1,\beta]_{+}$ type with $\beta\geq2$. In this case, we have the single mode character $\chi_1(x)=\frac{1}{1-\beta x}$ with $x=e^{i\theta}\in U(1)$ and the single-mode Fock space is infinte-dimensional $\mathcal{F}_1=\mathrm{span}\{\ket{n}~|~n=0,1,2,\dots\}$. As before, we shall solve equation \eqref{f-numbers equation} 
\begin{equation}\label{bose-character equation}
    x^{f_0}+x^{f_1}+x^{f_2}+\dots=\frac{1}{1-\beta x}.
\end{equation}
It is convenient to write the particle number $n$ in the form 
\begin{equation}
    n=1+\beta^2+\dots+\beta^{k-1}+z=\frac{\beta^k-1}{\beta-1}+z,
\end{equation}
with $k=0,1,2,\dots$ and $z=0,\dots, \beta^{k}-1$. Having this notation, the solution to \eqref{bose-character equation} is  simple, i.e., $f_{\frac{\beta^k-1}{\beta-1}+z}=k$. We have the following action of the excitation operator $\Tilde{N}$
\begin{equation}\label{bose U(1)}
    \Tilde{N}\ket{\frac{\beta^k-1}{\beta-1}+z}=k\ket{\frac{\beta^k-1}{\beta-1}+z}.
\end{equation}
Here $k$ represents the `new' occupation number of the bosonic type, while $z$ is an auxiliary quantum number. Since $z=0,\dots, \beta^{k}-1$ counts all possible states associated to $k$ bosonic excitations, it is  convenient to write $z$ in the $\beta$-base, i.e. $z=z_{k-1}\beta^{k-1}+z_{k-2}\beta^{k-2}+\dots+z_{0}\beta^0:=z_{k-1}z_{k-2}\dots z_{0}$, where $z_s=0,\dots,\beta-1$ are the digits. With this, we can introduce the mapping
\begin{equation}
    L_1\ket{\frac{\beta^k-1}{\beta-1}+z}=\begin{cases} 
      \ket{0}_B & k = 0, \\
      \ket{k}_B\otimes \ket{z_{k-1}z_{k-2}\dots z_{0}}_A & k>0,
   \end{cases}
\end{equation}
where $\ket{k}_B$ is the ordinary bosonic Fock (number) state with $k=0,1,2,\dots$, while $\ket{z_{k-1}z_{k-2}\dots z_{0}}_A$ (with $z_s=0,\dots, \beta-1$) is associated to the statistics degree of freedom. The generalization to the $d$-mode Fock state is as for the case of transfermions, i.e., we use the same equation \eqref{fermion mapping}. In this case, we have
\begin{equation}\label{bose mapping}
    L_d\ket{n_1,\dots,n_d}=\ket{k_1,\dots, k_d}_B\otimes \ket{z_{s_1},\dots,z_{s_N}}_A,
\end{equation}
where $\ket{k_1,\dots, k_d}_B$ is the ordinary bosonic number state with $k_s=0,1,2,\dots$, while $\ket{z_{s_1}\dots z_{s_N}}_A$ comes from type of statistics. The action of $U(d)$ is introduced in complete analogy to the fermionic case and equation \eqref{fermi rep}.
%%%%%%%%%%%%%%%%%%%%%%%%%
\subsection{Is hidden symmetry an ordinary internal symmetry?}
We may question if the hidden quantum numbers introduced in the previous section are related to some genuine degree of freedom emerging from the type of statistics. Could these numbers be associated with the standard internal degrees of freedom, such as spin? For example, degeneration in \eqref{energy spectrum} could be potentially explained by the argument that energy is spin-independent, and then, transatistics may be just an ordinary (fermionic of bosonic) statistics where $U(d)$ affects only external degrees of freedom (such as modes represented by paths of particles in Fig. \ref{setup}$a$). However, this argument cannot be well-aligned with the Fock-state decomposition in \eqref{transF}-\eqref{transB}, even though only multiplets of ordinary statistics appear in decomposition.
This is due to the dimension discrepancy between ordinary statistics and transtatistics. To see this, suppose that we deal with ordinary fermions with $d$ real degrees of freedom ($d$ modes on which $U(d)$ acts) and some internal degree of freedom (e.g., spin) with $z=0,\dots, \alpha-1$ values, which is unaffected by $U(d)$. The overall dimension of the single-particle space is $\alpha d$; hence the dimension of the fermionic Fock space is $2^{\alpha d}$. This starkly contrasts the dimension $\alpha^d$ of the transfermionic Fock space for $[1,\alpha]_{-}$ type. As we shall see, this dimension discrepancy will differentiate the thermodynamics of non-interacting systems of ordinary and transtatistics. The latter will be accompanied by the effect of generic spontaneous symmetry breaking absent in ordinary statistics. 

Note that analogy to, e.g., spin degree of freedom discussed here is only possible for order-one statistics. For higher-order statistics, no (obvious) similarities can be concluded. We will discuss this point later.
%%%%%%%%%%%%%%%%%%%%%%%%%%%%
\subsection{Relation to thermodynamics}
To study thermodynamics, we consider the single-particle energy spectrum $\epsilon_1,\dots, \epsilon_d$, where $\epsilon_k$s represent the energies associated with different modes. This situation is similar to the one discussed in Section \ref{Generalized N number}. In that section, we examined the unitary evolution generated by the Hamiltonian given in equation \eqref{Hamiltonian}. However, the system is in contact with a thermal bath in the present case. The thermodynamical quantities (e.g., for canonical ensemble) can be derived from the partition function $Z_{d}(\beta)=\mathrm{Tr}e^{-\beta \hat{H}}$, and its explicit form follows directly from the form of $H$, i.e., 
\begin{equation}\label{partition function}
Z_d(\beta)=\prod_{k=1}^dZ_1(e^{-\beta\epsilon_k})=\prod_{k=1}^d\chi_1(e^{-\beta\epsilon_k})
\end{equation}
with $\beta=1/k_BT$ being the Boltzmann factor. 

The physical relevance of character $\chi$ can also be understood through thermodynamics~\cite{Balantekin_2001}. This is because we can get the partition function from the character via Wick's rotation, that is, $it/\hbar\epsilon_k\rightarrow -\beta\epsilon_k$. In this respect, the product form of equation \eqref{partition function} arises directly from our central assumption \ref{A0}, i.e., the overall partition function can be expressed as a product of individual partition functions (associated with individual modes). This aligns with the expected behavior for independent systems, such as a set of independent modes. Therefore, assumption \ref{A0} is in one-to-one correspondence with the independence in a thermodynamical sense. The formula \eqref{partition function} trivially holds for ordinary statistics (bosons and fermions)~\cite{ashcroft}.

When the system is capable of exchanging excitations (particles) with a reservoir, we can analyze its behavior using the grand canonical partition function $\mathcal{Z}_d=\mathrm{Tr}e^{-\beta(\hat{H}-\mu\tilde{N})}$. In this expression, $\tilde{N}$ represents the excitation operator as defined in equation \eqref{number operator}, and $\mu$ corresponds to the chemical potential (variable conjugated to $\Tilde{N}$).
The explicit form of the grand canonical partition function $\mathcal{Z}_d$ is as follows
\begin{equation}\label{grand partition func}
\mathcal{Z}_d=\prod_{k=1}^dZ_1(e^{-\beta(\epsilon_k-\mu)})=\prod_{k=1}^d\chi_1(e^{-\beta(\epsilon_k-\mu)}).    
\end{equation}

%%%%%%%%%%%%%%%%%%%%%%%%%%%%%%%%%%%
%%%%%%%%%%%%%%%%%%%%%%%%%%%%%%%
\subsection{Thermodynamics of ideal gasses and spontaneous symmetry breaking}\label{TD}
%%%%%%%%%%%%%%%%%%%%%%%%
\begin{figure}
    \centering
    \includegraphics[scale=0.6]{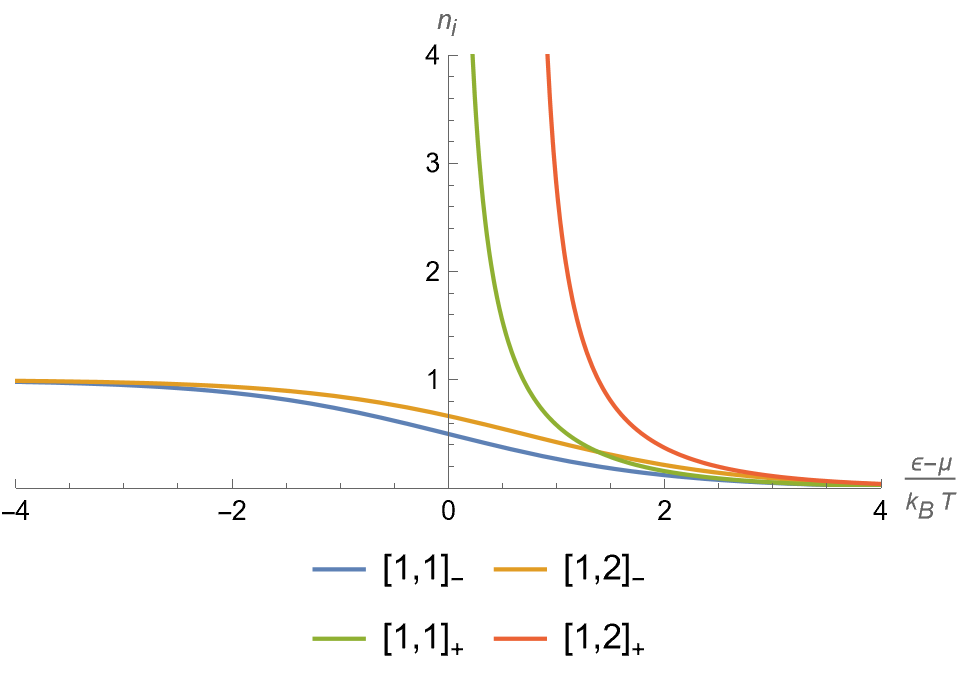}
    \caption{{\bf Mean particle number} for ordinary (blue and green) and generalized (orange and red) statistics.}
    \label{mean particle number}
\end{figure}
%%%%%%%%%%%%%%%%%%%%%%%%
Let us examine the thermodynamical properties of a non-interacting system for general order-one statistics $[1,q]_{\pm}$ with $q\in\mathbb{N}$. Ordinary statistics is recovered for $q=1$. We consider a grand-canonical ensemble defined by a set of single-particle energies $\epsilon_1,\dots, \epsilon_d$ associated with different modes. The system is described by a equilibrium state $\rho=\frac{1}{\mathcal{Z}_d}e^{-\beta(\hat{H}-\mu\Tilde{N})}$, where $\mathcal{Z}_d$ is the grand-canonical partition function defined in \eqref{grand partition func}. All thermodynamical quantities can be evaluated from the grand canonical potential $\Omega=-\frac{1}{\beta}\log \mathcal{Z}_d$. For example, $N=\frac{\partial\Omega}{\partial\mu}$ gives the mean particle number. For the case of transtatistics $[1,q]_{\pm}$, we get 
\begin{equation}\label{mean number}
    N=\sum_k n_k=\sum_i\frac{1}{\frac{1}{q}e^{\beta(\epsilon_k-\mu)}\mp1}.
\end{equation}
This expression reduces to the Fermi-Dirac and Bose-Einstein distributions for $q=1$. The plots for $n_i$ in \eqref{mean number} for various statistics are presented in Fig. \ref{mean particle number}. For the fermionic-type statistics, equation \eqref{mean number} reduces to the Fermi-Dirac distribution $n_k=\theta(\mu-\epsilon_k)$ at zero temperature for all $q$. For the bosonic type, the mean number diverges at the values of energy $\epsilon=\mu+\frac{1}{\beta}\log q$ when the Bose-Einstein condensation occurs. In the classical limit of $\beta(\epsilon-\mu)\gg1$, the formula \eqref{mean number} reduces to the standard Maxwell-Boltzmann distribution, i.e., $n_k\approx q e^{-\beta(\epsilon_k-\mu)}$, where the factor $q$ appears as the degeneracy factor. The same factor appears in the classical limit for standard quantum gasses with $q=2s+1$ coming from spin $s$ (see, for example, Chapter 8.3. in \cite{salinas2013}). This is because the energy is independent of spin, and thus, the energy spectrum degenerates. 
    
Note that the chemical potential $\mu$ in the formulas above is temperature dependent. To be more precise, the standard approach to thermodynamics of ideal gasses is to keep total particle number $N$ as a fixed parameter and then invert \eqref{mean number} to calculate chemical potential $\mu=\mu(N, T)$ as a function of a total number of particles and temperature~\cite{ashcroft}. Given this, one can introduce a simple change of variables $\mu\rightarrow\mu-\frac{1}{\beta}\log q$, and the formula \eqref{mean number} would reduce to one for ordinary statistics. This means that solution for the chemical potential for order-one transtatistics is 
\begin{equation}\label{chmical potential}
\mu_q=\mu_{q=1}-k_BT\log q,    
\end{equation}
where $\mu_{q=1}$ is the chemical potential of ordinary statistics. What follows is that almost all thermodynamical quantities (e.g., mean energy, heat capacity, etc.) remain the same as in the case of ordinary statistics for arbitrary $q$. Nevertheless, the entropy will change. To see this, note that $S=-\beta \Omega+\beta\langle E\rangle -\beta\mu N$, thus the shift of $-k_BT\log q$ in the chemical potential introduces a change in the entropy, i.e.
\begin{equation}
    S_{q}=S_{q=1}+k_BN\log q.
\end{equation}
The entropy of ordinary statistics $S_{q=1}$ vanishes at $T=0$; hence, a residual entropy of $k_B N\log q $ remains at zero temperature for all $q>1$. This is consistent with the fact that fermionic (bosonic) $N$-particles IRs in the Fock space decomposition \eqref{transF}-\eqref{transB} appear $q^N$ times; therefore, the ground state is $q^N$ times degenerate. This degeneration is known to result in \emph{residual entropy at zero temperature} and is associated with \emph{spontaneous symmetry breaking}~\cite{peierls1991spontaneously}, here present for transtatistics. This is one of the main differences compared to ordinary quantum gasses exhibiting non-degenerate ground states.  
%%%%%%%%%%%%%%%%%%%%%%%%%%%
\section{Discussion and outlook}
\subsection{Statistics of higher order} 
Here we briefly analyze some of the technical and conceptual difficulties that arise when dealing with statistics of higher order. As an illustration, we take the example of statistics of order two $[1,q_1,q_2]_{\pm}$. A simple inspection shows that polynomial $Q_{\pm}(x)=1\mp q_1x+q_2x^2$ has non-negative (positive) roots for $q_1^2>4q_2$. To see how the Fock space decomposes in some simple cases, consider transfermions $[1,q,1]_{-}$ and the corresponding two-mode character 
\begin{eqnarray}\nonumber
\chi_2(x_1,x_2)&=&(1+qx_1+x_1^2)(1+qx_2+x_2^2)\\\nonumber
&=&s_{(0,0)}(x_1,x_2)+qs_{(1,0)}(x_1,x_2)+\\\nonumber
&=&(q^2-1)s_{(1,1)}(x_1,x_2)+s_{(2,0)}(x_1,x_2)+\\
&+&qs_{(2,1)}(x_1,x_2)+s_{(2,2)}(x_1,x_2).    
\end{eqnarray}
The IR characters $s_{(2,1)}$ and $s_{(2,2)}$ that are not fermionic nor bosonic type show-up in the decomposition. This is a typical feature that appears for any higher-order statistics. In turn, finding Fock space's decomposition for general $d$ modes, such as one provided for order-one statistics in \eqref{transF}-\eqref{transB}, is more difficult. Next, the dimension of the single-mode Fock space is $q+2$, and the maximal occupation number is $p=q+1$. The solution to the single-mode character equation \eqref{f-numbers equation}
\begin{equation}
    x^{f_0}+\dots +x^{f_{q+1}}=1+qx+x^2
\end{equation}
is $f_0=0$ and $f_{q+1}=2$, while $f_n=1$ for $n=1,\dots,q$. Recall that these are the eigenvalues of the excitation operator $\Tilde{N}$ in \eqref{number operator}, and as we see, we have three distinct values $k=0,1,2$. Again, we have degeneration of the spectrum, but resolving it is a more delicate issue than for the case of order-one statistics we have presented in Section \ref{order-one statistics}.  This is partially because a clear interpretation is missing. For example, we may try to label the single-mode Fock states with two quantum numbers, $k=0,1,2$ (for excitations), and one auxiliary number $z_k$, to account for degeneracy. As before, we have $\tilde{N}\ket{k+z_k}=k\ket{k+z_k}$, with 
$z_k=0$ for $k=0,2$, while for $k=1$ we have $z_k=0,\dots,q$. This appears paradoxical because degeneracy is present for one excitation but disappears for two. From this example, we see the analysis of degeneracy and categorization of hidden quantum numbers becomes significantly more complicated due to the involvement of `non-standard' IRs. 

Let's take a look at the general case of $[1, q_1, \dots, q_m]_{\pm}$, where $Q_\pm(x) = \prod_{s=1}^m (1 \pm \alpha_k x)$ with $\alpha_s > 0$ and $m=\mathrm{deg}[Q_\pm]$ representing the degree of statistics. Given the Cauchy identities \cite{stanley_fomin_1999}  
\begin{widetext}
\begin{eqnarray}\label{FermiCauchy}
    \prod_{k=1}^d\prod_{s=1}^m (1+\alpha_sx_k)&=&\Sigma_{l(\lambda) \leq m} s_{\lambda}(x_1,...,x_d)s_{\lambda}(\alpha_1,...,\alpha_m),\\\label{BoseCauchy}
    \prod_{k=1}^d\prod_{s=1}^m\frac{1}{1-\alpha_sx_k}&=&\Sigma_{l(\lambda) \leq m} s_\lambda(x_1,...,x_d)s_\lambda(\alpha_1,...,\alpha_m),
\end{eqnarray}
\end{widetext}
where $l(\lambda)$ is the length of the diagram (number of rows), and $l(\lambda')$ is the conjugate partition of $\lambda$, we that the multiplicity in the Fock space decomposition \eqref{Fock space IR decomposition} are given by Schur polynomials evaluated at
the parameters $\alpha_s$, i.e. $a(\lambda)=s_\lambda(\alpha_1,...,\alpha_m)$. This multiplicity coefficient can be simplified using the so-called Kostka numbers~\cite{stanley_fomin_1999}, but we leave this for further consideration due to the involvement of non-trivial combinatorics.      

On the other hand, thermodynamic considerations are simpler due to the product nature of the single-mode character. In the general case $[1,q_1,\dots,q_m]_{\pm}$ the grand canonical partition function \eqref{grand partition func} becomes
\begin{equation}\label{grand partition func1}
\mathcal{Z}_d=\prod_{k=1}^d\prod_{s=1}^m\left(1\mp e^{-\beta(\epsilon_k-\mu)}\right)^{\mp1}.    
\end{equation}
In complete analogy to the derivation of \eqref{mean number}, we get the mean particle number as
\begin{equation}\label{mean number general}
    n_k=\sum_{s=1}^m\frac{1}{\frac{1}{\alpha_s}e^{\beta(\epsilon_k-\mu)}\mp1},
\end{equation}
where $m$ is the degree of statistics. This expression reduces to the Fermi-Dirac and Bose-Einstein distributions for $m=1$ and $\alpha_1=1$. For transfermions, equation \eqref{mean number general} reduces to $n_k=m~\theta(\mu-\epsilon_k)$ at zero temperature. As expected, $m$ (degree of statistics) trasnfermions can occupy the same energy level at $T=0$ as a consequence of the generalized Pauli exclusion principle. For $N$ particles, transfermions will form the analogon of the Fermi sea at zero temperature. On the other hand, we have for transbosons, the mean number diverging at the values of energy $\epsilon=\mu+\frac{1}{\beta}\log \alpha_{\mathrm{max}}$ when the Bose-Einstein condensation occurs. In the classical limit of $\beta(\epsilon-\mu)\gg1$, the formula \eqref{mean number general} reduces to the standard Maxwell-Boltzmann distribution, i.e., $n_k\approx q e^{-\beta(\epsilon_k-\mu)}$, where the factor $q=\sum_s\alpha_s (\beta_s)$ appears as the degeneracy factor.
%%%%%%%%%%%%%%%%%%%%%%%%%%%%%%%%%%
\subsection{'Negative occupation' numbers, exceptional statistics and extension to an infinite number of modes}\label{Negative occupations}
So far, we have discussed only the case of non-negative eigenvalues of $U(1)$ generators, i.e., the generalized excitation operator as defined in Section \eqref{Generalized N number} and eq. \eqref{Casimir} satisfying $\Tilde{N}\geq0$. Here, we will discuss the extensions to 'negative occupations', allowing $\Tilde{N}$ to have negative eigenvalues, which encounters the most general $U(1)$ generator. On the physical side, such a situation is particularly relevant for the theory of antiparticles introduced for ordinary statistics by Dirac~\cite{dirac1930theory}. 
To do so, we shall extend the single-mode character in \eqref{single-mode character} to a (formal) Laurent series $\chi_{1}(x)=\sum_{s\in\mathbb{Z}}^{}a_sx^s$, with $a_s$ integer coefficients, such that the overall product $\chi_d(\vec{x})=\prod_{k=1}^d\chi_{1}(x_k)$ is a valid $U(d)$-character. In such case, $\chi_d(\Vec{x})$ decomposes over Laurent-Schur polynomials of the form $(x_1\dots x_d)^{-k}s_{\lambda}(x_1,\dots,x_d)$, with $k\in\mathbb{N}_0$, encompassing the most general irreducible representations of $U(d)$~\cite{stanley_fomin_1999}. Details are provided in the Appendix \ref{Negative occupations}, where we prove the following classification theorem: 
%%%%%%%%%%%%%%%%%%%%%%%%%
\begin{theorem}
    General $U(1)$-character (under conditions previously discussed) is one of the following forms
    \begin{itemize}
        \item [1)] {\bf transtatistical type}, i.e. either $x^k\chi_1(x)$ or $x^k\chi_1(1/x)$, where $\chi_1(x)$ is the character classified by the Theorem \ref{Partition theorem} and $k\in\mathbb{Z}$, or
        \item [2)] {\bf exceptional type} $a\sum_{s\in\mathbb{Z}}\rho^s x^s$, with $a,\rho\in\mathbb{N}_0$.
    \end{itemize}
\end{theorem}
This concludes the classification in the most general case. For the trastatistical types, the factor $x^k$ builds to $(x_1\dots x_d)^k=(\mathrm{det}(g))^k$ for the case of $d$ modes, which corresponds to the determinant representation. Thus, representations can be constructed from representations obtained for transtatistics wired with the determinant representation (power $k$). Note that $\chi_1(1/x)=\chi_1(x)^\ast$, because $x=e^{i\varphi}$, thus this character is associated with the conjugate representation. Finally, the exceptional character is of a fundamentally different form and essentially comes from exceptional totally-positive sequences (see Appendix \ref{Negative occupations} for details). This is a very interesting case in which generalized number operator $\tilde{N}$ is unbounded both from below and above, which is not the case for transtatistical types of statistics.

Finally, we will conclude this section with a brief comment on the generalization to the infinite set of modes when $d=\infty$. This is particularly relevant for investigating the algebra of creation and annihilation operators for transtatiscics field theory. In this respect, establishing a relation to the existing results on $U(\infty)$ representations and extreme characters theory~\cite{vershik1982,okounkov1997,voiculescu1976} is very promising. In particular, striking similarities are found in the generating functions of the so-called ``extreme characters'' and our principal decomposition given in Theorem \ref{Partition theorem}.
%%%%%%%%%%%%%%%%%%%%%%%%%
%%%%%%%%%%%%%%%%%%%%%%%%%%%%%%%%%%
\subsection{Relation to other generalized statistics}
An obvious question is if and how our statistics classified in \eqref{statistics} differs from other generalized statistics presented in the literature. Of course, we are not able to exhaustively compare but rather analyze the most common cases. The first remark is that the main difference is due to the underlying symmetries. Our classification relies on the $U(d)$ group, while in most of the cases, other generalized statistics is based on a different group. Take an example of fractal statistics~\cite{Wilczek82, myrheim} where topological defects and representation of braid groups~\cite{fredenhagen89} play the central role.  We can have, for example, the action of $2\pi$-rotation, leaving a non-trivial phase. This contrasts the $2\pi$-periodicity, essential to derive the integer spectrum for the excitation operator in our equation \eqref{number operator}. This suggests that we speak of different kinds of particle statistics due to the involvement of different symmetry groups. On the other hand, the recent work of ~\cite{Haldane08} suggests that fractal statistics can be phrased in terms of Jack polynomials, which generalize Schur polynomials (our primary tool to classify statistics). This relationship is worth looking into in the future. 

Very similar holds for many generalized statistics related to deformed canonical commutation relations. Take an example of $q$-deformations (quons) with $a_ia^{\dagger}_j-q a^{\dagger}_ja_i=\delta_{ij}\openone$~\cite{quons}.
However, $q$-deformations introduce new symmetries even at the level of a single particle, i.e., $q$-deformed $U(d)$ \cite{Jimbo85} group, while our statistics is directly paired to the $U(d)$ symmetry. Still, some comparison might be possible for the order-one statistics, where our ansatz of Section \ref{order-one statistics} provides the means to construct the algebra of creation and annihilation operators and evaluate the corresponding commutation relations. 

Finally, the question is how our generalization is related to parastatistics~\cite{Green53}. As already pointed out, the group behind the parastatistics is different~\cite{ryan1963, Stoilova_2008}. This leads to the different Fock space decomposition, i.e., for parastatistics of order $p$, we have~\cite{HARTLE69,stoilova}
\begin{align}\label{parabose}
  \mathcal{F}_{\text{parab}}&=\bigoplus_{l(\lambda)\leq p} \mathcal{V}_{\lambda}\\\label{parafermi}
  \mathcal{F}_{\text{paraf}}&=\bigoplus_{l(\lambda')\leq p} \mathcal{V}_{\lambda},
\end{align}
where the sum runs over Young diagrams $\lambda$ (parabose case) or $\lambda'$ (parafermi case) of the length $l(\lambda)$ (number of rows). Here $\lambda'$ is the conjugated diagram of $\lambda$ and $\mathcal{V}_{\lambda}$ is an $U(d)$-IR associated to $\lambda$. This decomposition contains no multiplicities and thus is compatible with our classification only for the case of ordinary statistics.

%%%%%%%%%%%%%%%%%%%%%%%%%%%%%%%%%%% 
\subsection{Some open questions and applications}
The broad range of possibilities for generalized statistics introduced here leaves many interesting open questions and potential for applications. Firstly, an open question is what is more on the physical side (compared to new effects already discussed) that tanstatistics brings. As we already discussed, there are technical difficulties with higher-order statistics, mainly in the context of hidden quantum numbers. Nevertheless, we may study thermodynamics directly by the ansatz defined in section \ref{TD}. One has to calculate partition functions given in \eqref{partition function} for more general characters. In this case, a simple shift of the chemical potential in \eqref{chmical potential} will not reduce thermodynamical quantities to ones given by ordinary statistics as it happens for order-one statistics. Given this, we can expect other novel physical effects to appear. 

The next exciting point to analyze is the application of our method to diagonalize solid-state Hamiltonians, such as it is done for spin-chains via spin-fermion mapping (Jordan-Wigner transformation)~\cite{lieb61}. For example, the transfermionic Fock space for $[1,\alpha]_{+}$ is isomorphic to $(\mathbb{C}^{(\alpha+1)})^{\otimes d}$ which is suitable to study higher dimensional spin chains. In complete analogy to the spin-fermion mapping, one can expect to find other integrable many-body Hamiltonians that reduce to our non-interacting model. 

An interesting point to be analyzed is the question of entanglement in transtatistics. This question has raised a long-standing debate in the community regarding the case of bosons and fermions due to the apparent entanglement present in the first quantized picture, which comes solely from the (anti) symmetrization of the wave function. It is accepted nowadays that such ``kinematic'' entanglement is physical~\cite{benatti2020,Morris2020}. In the case of transtatistics, the starting point is different as we do not know if the first-quantized picture for such generalized statistics exists; thus, the situation is less clear. However, one can try to put transtatistics in the context of quantum-information processing and protocols designed to study the entanglement of standard indistinguishable particles (see ~\cite{Morris2020} and references therein). This shall provide a more clear view of the relation between entanglement and indistinguishability in this case. Along these lines, an exciting perspective on our results comes from the quantum computational complexity of quantum statistics. Namely, it is well-known that the non-interacting bosons are computationally hard to simulate~\cite{aaronson11}, while non-interacting fermions are not~\cite{terhal02}. One can ask a similar question here, i.e., what is the computational power of the non-interacting model for transtatistics? Any answer to it is relevant and may find applications in quantum computing. 

Finally, on the speculative side, an interesting idea of applying generalized statistics in the context of dark-matter modeling was recently presented~\cite{hoyuelos2022dark}. The main point is to study thermodynamics and the effects of the negative relation between pressure and energy density, emphasized in many existent dark energy candidates. Our methods provide a direct way to calculate thermodynamical properties of transtatistics and thus might be worthy of investigating relations to dark-matter models.

\begin{acknowledgements}
The authors thank S. Horvat, J. Morris and {\v C}. Brukner for their helpful comments. This research was funded in whole, or in part, by the Austrian Science Fund (FWF) [10.55776/F71], [10.55776/P36994] and [10.55776/COE1] and the European Union – NextGenerationEU. For open access purposes, the author(s) has applied a CC BY public copyright license to any author accepted manuscript version arising from this submission.
\end{acknowledgements}

\bibliography{apssamp}

\onecolumngrid
\appendix

 \section{Schur polynomials}\label{Schur polynomails}
The linear representations of the general linear group $GL(n,\mathbb{C})$ and its maximally compact subgroup $U(n)$ are unambiguously identified by \emph{Schur polynomials} as their characters~\cite{fulton}. These families of polynomials appear in different regions of mathematics, from pure combinatorics to algebraic geometry. This is why several standard definitions exist depending on the specific context they are discussed. Here we present them in the combinatorial definition to emphasize the combinatorial of our operational reconstruction of particle statistics. Other methods, such as the classical (determinant) definition, are standardly found in the literature ~\cite{stanley_fomin_1999}. 

Let $\lambda = (\lambda_1, \lambda_2, \dots)$ be an integer partition with $\lambda_1 \geq \lambda_2 \geq \dots$, usually represented with a \emph{Young diagram} (see Figure \ref{fig:my_label}). The total number of boxes in a diagram is denoted by $|\lambda| = \lambda_1 + \lambda_2 + \dots$, and the partition length (the number of rows) is labeled by $l(\lambda)$. 

 \begin{figure}[h!]
     \centering
     \includegraphics[scale=0.5]{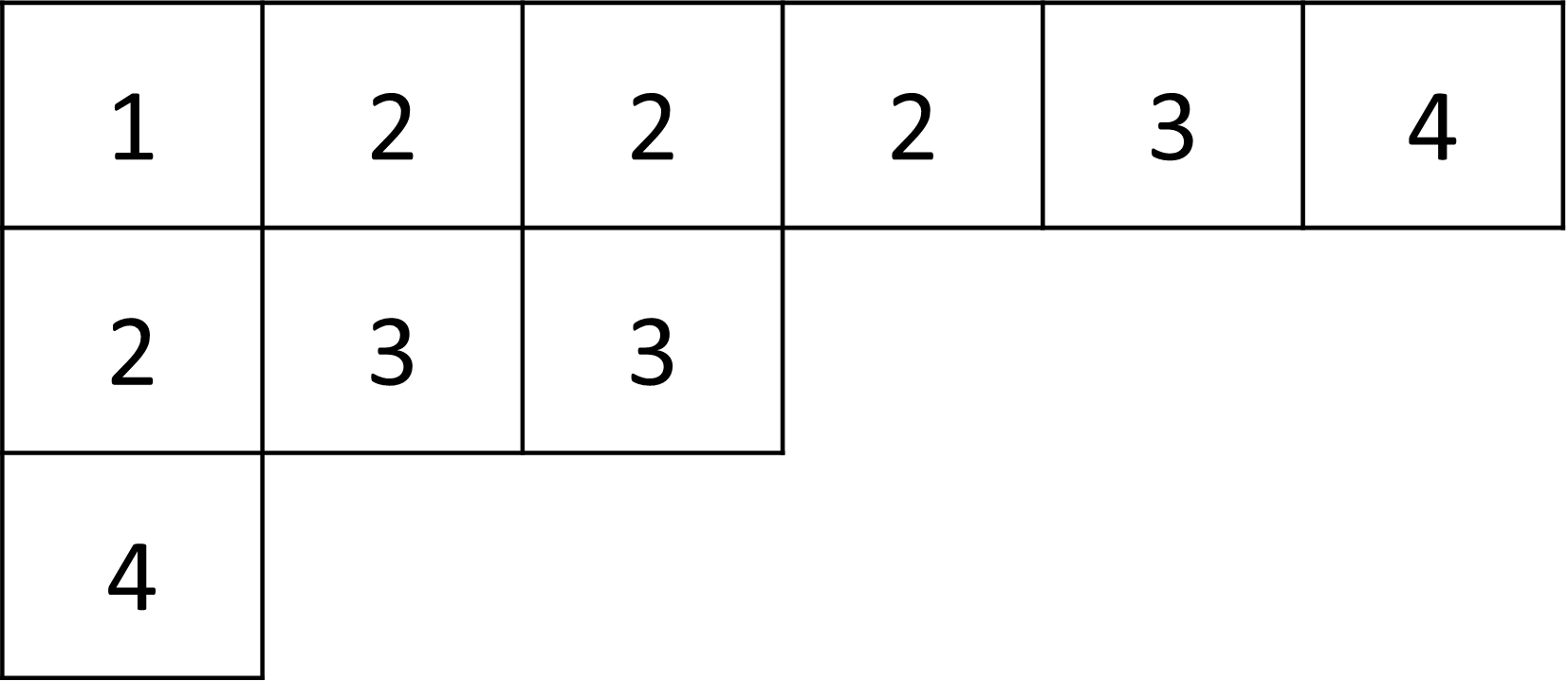}
     \caption{\hbox{Young diagram for $\lambda=(6,3,1)$}. The filling of the diagram defines one SSYT.}
     \label{fig:my_label}
 \end{figure}
 The \emph{semistandard} Young tableau (SSYT) of shape $\lambda$ is a \emph{filling} of the boxes in the Young diagram with positive integers such that the entries weakly increase along each row and strictly increase down each column. For given a SSYT $T$, we define the \emph{type} of $T$,  i.e., $\alpha(T)=(\alpha_1,\alpha_2,\dots)$ with $\alpha_i(T)$ being the number of repetitions of the number $i$ in $T$. For example, for the SSYT given in Fig. \ref{fig:my_label}, we have $\alpha(T)=(1,4,3,2)$. For a set of variables $x_1, x_2, \dots$ we define
 \begin{equation}
x^{T}\equiv x_1^{\alpha_1(T)}x_2^{\alpha_2(T)}\dots.
 \end{equation}
\begin{definition}
Let $\lambda$ be a partition. The \emph{Schur function} $s_\lambda(x_1,\dots,x_d)$ in $d\geq l(\lambda)$ variables associated with $\lambda$ is a homogeneous symmetric polynomial of degree $|\lambda|$ defined as:
\begin{equation}
s_\lambda(x)=\sum_{T\in\mathrm{SSYT}_\lambda(d)} x^T,
\end{equation}
where the sum runs over all semistandard Young tableaux (SSYTs) of shape $\lambda$ with the filling from the set $\{1,\dots,d\}$.
\end{definition}
The common examples are the bosonic Schur functions (\emph{homogeneous symmetric polynomials}) 
\begin{equation}
    s_{(n,0,\dots,0)}(x_1,\dots,x_d)=\sum_{k_1+\dots+k_d=n}x_1^{k_1}\dots x_d^{k_d},~~~~k_s=0\dots d,
\end{equation}
associated with partitions having a single row of size $n$ ($n$-particle sector), and the fermionic Schur functions (\emph{elementary symmetric polynomials}) 
\begin{equation}
  s_{(1,\dots,1,0,\dots,0)}(x_1,\dots,x_d)=\sum_{1\leq i_1<i_2\dots< i_n\leq d}x_{i_1}x_{i_2}\dots x_{i_n},  
\end{equation}
associated with partitions having a single column of size $n$ ($n$-particle sector). Notably, these bosonic and fermionic partitions are conjugate to each other. Two partitions are said to be conjugate if they can be obtained from each other by interchanging rows and columns in their Young diagrams.
 \section{Proof of main (partition) theorem}\label{main proof}
Before proceeding with the proof, we must introduce some basic definitions and important known theorems. 
\begin{definition}
 We call $f(x)=\sum_n a_n x^n$ the generating function of sequence $\lbrace a_n \rbrace$. 
\end{definition}
%%%%%%%%%%%%%%%%%%%%%%%%%
\begin{definition}
A matrix $(A_{ij})_{i,j\in I}$ is called T{\" o}plitz if $A_{i,j}=a_{i-j}$.
\end{definition}
%%%%%%%%%%%%%%%%%%%%%%%%%
\begin{definition}
A sequence of real numbers $\lbrace a_n \rbrace_{n\in \mathbb{Z}}$ is totally positive if and only if all the minors of the Töplitz matrix $(a_{i-j})_{i,j\in \mathbb{Z}}$ are non-negative. For sequences defined only on non-negative integers $\lbrace a_n \rbrace_{n\in \mathbb{N}_0}$, we assume the extension $a_n=0$ for $n<0$. Generating function $f(x)=\sum_n a_n x^n$ is called totally positive if and only if $\{a_n\}$ is totally positive.
\end{definition}
%%%%%%%%%%%%%%%%%%%%%%%%%
\begin{proposition}[\cite{bump}]\label{Proposition Bumb}
Let $f(x)=\sum_n a_n x^n$ and $D_{d-1}^{\lambda,\mu}(f)=\det (a_{\lambda_i-\mu_j-i+j})_{1\leq i,j\leq d}$ with $\lambda$ and $\mu$ being a pair of partitions of possibly different integers. Then every T{\" o}plitz minor of matrix $(a_{i-j})_{i,j\in \mathbb{Z}}$ is of the form $D_{d-1}^{\lambda,\mu}(f)$. Furthermore, we have
\begin{equation}
D_{d-1}^{\lambda,\mu}(f)=\int_{U(d)}\Phi_{n,f}(g)\overline{s_\lambda(g)}s_\mu(g)dg
\end{equation}
with $\Phi_{d,f}=f(x_1)...f(x_d)$ where $x_1,...,x_d$ are the eigenvalues of $g$, and $\int_{U(d)}\dots dg$ is the Haar measure integral.
\end{proposition}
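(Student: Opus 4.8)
The plan is to prove the two assertions of the proposition separately: first the purely combinatorial claim that an arbitrary minor of the bi-infinite Töplitz matrix $(a_{i-j})_{i,j\in\mathbb{Z}}$ can be brought into the normalized form $D_{d-1}^{\lambda,\mu}(f)=\det(a_{\lambda_i-\mu_j-i+j})$ for a suitable pair of partitions $\lambda,\mu$ of possibly different integers, and second the integral representation of this determinant over $U(d)$. The representation theory enters only in the second part, which I would handle with the Weyl integration formula together with the Weyl character formula for the Schur polynomials $s_\nu$.

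For the combinatorial claim, a minor is selected by a strictly increasing row set $r_1<\dots<r_d$ and column set $c_1<\dots<c_d$, giving $\det(a_{r_i-c_j})$. First I would reverse both orderings, $r'_i=r_{d+1-i}$ and $c'_j=c_{d+1-j}$; since reversing $d$ indices is the \emph{same} permutation on rows and columns, the two accompanying signs cancel and the minor is unchanged, but now $r'_1>\dots>r'_d$ and $c'_1>\dots>c'_d$ are strictly decreasing. Then I would absorb one common integer shift $C$ by setting $\lambda_i=r'_i+i-C$ and $\mu_j=c'_j+j-C$. Strict decrease of $r'$ and $c'$ forces $\lambda$ and $\mu$ to be weakly decreasing, and choosing $C\le\min(r'_d+d,\,c'_d+d)$ makes both nonnegative, hence genuine partitions. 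The shift cancels in every difference, $(\lambda_i-i)-(\mu_j-j)=r'_i-c'_j$, so the minor equals $\det(a_{(\lambda_i-i)-(\mu_j-j)})=D_{d-1}^{\lambda,\mu}(f)$ exactly.

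For the integral formula I would use that $\Phi_{d,f}(g)\overline{s_\lambda(g)}s_\mu(g)$ is a class function and apply the Weyl integration formula, reducing the Haar integral to a torus average weighted by $|\Delta(x)|^2$, where $\Delta=\prod_{i<j}(x_i-x_j)$ is the Vandermonde. Writing Schur functions via the Weyl character formula $s_\nu=a_{\nu+\delta}/a_\delta$ with $\delta=(d-1,\dots,1,0)$ and $a_\alpha(x)=\det(x_i^{\alpha_j})$, and using $a_\delta=\Delta$, the weight telescopes: $\overline{s_\lambda}\,s_\mu\,|\Delta|^2=\overline{a_{\lambda+\delta}}\,a_{\mu+\delta}$. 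On the torus $\overline{x_k}=x_k^{-1}$, so the surviving integral is simply extraction of the constant Laurent coefficient of $\prod_k f(x_k)\,\overline{a_{\lambda+\delta}}\,a_{\mu+\delta}$. Expanding both alternants as signed sums over $S_d$ and each $f(x_k)=\sum_n a_n x_k^n$, the constant-term condition fixes every exponent and yields the double sum $I=\frac{1}{d!}\sum_{\sigma,\tau}\mathrm{sgn}(\sigma)\mathrm{sgn}(\tau)\prod_i a_{\lambda_{\tau(i)}-\tau(i)-\mu_{\sigma(i)}+\sigma(i)}$.

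The decisive step, and where I expect the genuine bookkeeping to live, is collapsing this double sum into a single determinant. Reindexing the product by $j=\tau(i)$ and setting $\pi=\sigma\tau^{-1}$ converts $\mathrm{sgn}(\sigma)\mathrm{sgn}(\tau)$ into $\mathrm{sgn}(\pi)$ and makes the summand independent of $\tau$; the $d!$ choices of $\tau$ then cancel the prefactor $1/d!$, leaving $I=\sum_\pi\mathrm{sgn}(\pi)\prod_j a_{\lambda_j-\mu_{\pi(j)}-j+\pi(j)}=\det(a_{\lambda_i-\mu_j-i+j})$, which is precisely $D_{d-1}^{\lambda,\mu}(f)$. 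As a sanity check I would verify the baseline $f\equiv 1$: there $a_n=\delta_{n,0}$ reduces the determinant to the indicator $\delta_{\lambda\mu}$ and reproduces Schur orthonormality $\int_{U(d)}\overline{s_\lambda}\,s_\mu\,dg=\delta_{\lambda\mu}$. Any convergence concern for general $f$ is dispatched by reading $f$ as a formal power series and interpreting the integral coefficientwise, since only finitely many monomials contribute to each constant term.
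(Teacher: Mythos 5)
Your proposal is correct, but there is nothing in the paper to compare it against: the paper states this proposition as an imported result, cited to the reference [bump], and gives no proof of it (its Appendix B only \emph{uses} the proposition inside the proof of the Partition theorem). Judged on its own, your argument is sound and is essentially the standard proof of this Heine--Szeg\H{o}-type identity as found in the cited source. Part one checks out: reversing rows and columns by the same permutation leaves the minor invariant, the shift $C\le\min(r'_d+d,\,c'_d+d)$ produces genuine partitions, and the differences $(\lambda_i-i)-(\mu_j-j)=r'_i-c'_j$ are shift-independent, so every Töplitz minor is realized (note the representation by $(\lambda,\mu)$ is non-unique in $C$, which is harmless for the existence claim). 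Part two is also correct: the Weyl integration formula, the telescoping $\overline{s_\lambda}\,s_\mu\,|\Delta|^2=\overline{a_{\lambda+\delta}}\,a_{\mu+\delta}$, and the constant-term extraction giving $n_i=\lambda_{\tau(i)}-\tau(i)-\mu_{\sigma(i)}+\sigma(i)$ are all right, and your collapse of the double sum over $(\sigma,\tau)$ via $\pi=\sigma\tau^{-1}$ is exactly Andr\'eief's (Gram/Cauchy--Binet) integration identity, which you could invoke in one stroke to avoid the explicit bookkeeping. Two details you handled well deserve emphasis: the formal coefficientwise reading of the integral is not a luxury but a necessity here, since the paper applies the proposition to characters such as $\chi_1(x)=1/(1-\beta x)$ with $\beta\ge 1$, which are not integrable on the unit torus; and the convention $a_n=0$ for $n<0$ (Definition 3 of the paper) is automatically respected by your expansion because $f$ contains only nonnegative powers. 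Your $f\equiv 1$ sanity check correctly recovers Schur orthonormality, since the strict decrease of $\lambda_i-i$ and $\mu_j-j$ forces the matching permutation to be the identity.
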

%%%%%%%%%%%%%%%%%%%%%%%%%
\begin{proposition}[\cite{Schoenberg}]\label{ASW theorem}
The sequence $\lbrace a_n \rbrace_{n\in\mathbb{N}_0}$ with $a_0=1$ is totally positive if and only if it is generated by a function $f(x)$ of the form
\begin{equation}
f(x)=e^{\gamma x}\frac{\Pi_i (1+\alpha_i x)}{\Pi_i (1-\beta_i x)}
\end{equation}
with $\alpha,\beta,\gamma \geq 0$ and $\sum \alpha_i,\sum \beta_i$ convergent.
\end{proposition}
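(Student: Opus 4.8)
The plan is to prove both implications of Proposition \ref{ASW theorem}, treating sufficiency constructively and necessity through the polynomial case plus a complex-analytic limiting argument. The algebraic fact I would exploit throughout is that convolution of sequences corresponds simultaneously to multiplication of generating functions and to multiplication of their bi-infinite Töplitz matrices: if $h=fg$ with $c_n=\sum_k a_k b_{n-k}$, then $\sum_l a_{i-l}b_{l-j}=c_{i-j}$, so $T(h)=T(f)T(g)$. Total positivity is therefore stable under products by the Cauchy--Binet formula, which expands each minor of a matrix product as a non-negative sum of products of minors of the factors (legitimate here because at every fixed minor only finitely many terms contribute, since each building block has a banded or triangular Töplitz matrix).

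\emph{Sufficiency.} First I would verify total positivity of the three elementary factors. The factor $1+\alpha x$ produces the sequence $(1,\alpha,0,0,\dots)$, whose Töplitz matrix is lower-bidiagonal with non-negative entries, so every minor is a non-negative monomial. The factor $(1-\beta x)^{-1}$ produces the geometric sequence $\{\beta^n\}$, and a direct verification shows its lower-triangular Töplitz matrix $(\beta^{i-j})_{i\ge j}$ is totally positive. The factor $e^{\gamma x}$ is the coefficientwise limit of $(1+\gamma x/m)^m=\prod^m(1+(\gamma/m)x)$, hence totally positive by Cauchy--Binet together with closedness of the totally positive cone under entrywise limits. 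Multiplying finitely many such factors preserves total positivity by the identity $T(fg)=T(f)T(g)$; the general form then follows as a limit once $\sum\alpha_i,\sum\beta_i$ converge, so that the infinite products define analytic functions realized as limits of totally positive finite products.

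\emph{Necessity.} Here I would begin from the low-order minor constraints. The $2\times2$ Töplitz minors give $a_n^2\ge a_{n-1}a_{n+1}$, i.e.\ log-concavity of $\{a_n\}$, which forces $\limsup a_n^{1/n}<\infty$ and makes $f$ analytic near the origin. The finite-support case is settled first: using the variation-diminishing property of totally positive matrices one shows, by induction on the degree, that a totally positive finite sequence has a generating polynomial with only real non-positive zeros, that is $\prod(1+\alpha_i x)$. The general case is reduced to this one by approximating $f$ through associated polynomial and rational sections and passing to the limit, which forces $f$ to be meromorphic with only negative real zeros and positive real poles and to admit a Hadamard factorization of exactly the stated shape, the exponential factor absorbing the genus-one growth.

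The main obstacle is the necessity direction, specifically the limiting argument that upgrades the finite real-rootedness characterization to the full meromorphic product form. This is the substance of Edrei's theorem and is genuinely complex-analytic: one must control the location and accumulation of the zeros and poles of the approximants, exclude complex zeros via the variation-diminishing property, and establish convergence of the Hadamard products, which needs entire-function theory of finite genus rather than the elementary algebra that suffices for sufficiency. Since the statement is quoted from \cite{Schoenberg}, it is invoked in the paper as a known theorem; the route above indicates how it is established.
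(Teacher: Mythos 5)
The paper does not prove this proposition at all: it is imported as a known external result (the Aissen--Schoenberg--Whitney--Edrei characterization of totally positive sequences), cited to \cite{Schoenberg} and then used as a black box in Lemma \ref{integral lemma} and in the proof of the Partition theorem. So there is no internal proof to compare against, and your attempt must be judged on its own terms. Your sufficiency direction is sound and essentially complete: the identity $T(fg)=T(f)T(g)$ for one-sided sequences, Cauchy--Binet with the finiteness caveat you correctly note, total positivity of the bidiagonal factor $1+\alpha x$ and of the geometric factor $(1-\beta x)^{-1}$, the coefficientwise limit $(1+\gamma x/m)^m\to e^{\gamma x}$, and closure of the totally positive cone under entrywise limits together with convergence of the infinite products under $\sum\alpha_i,\sum\beta_i<\infty$ --- this is the standard argument for the ``if'' half.

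The necessity direction, however, has a genuine gap exactly where you place the weight. The proposed reduction of the general case to the polynomial case ``by approximating $f$ through associated polynomial and rational sections and passing to the limit'' fails as stated, because truncations of a totally positive sequence are in general \emph{not} totally positive: $\{\beta^n\}_{n\geq 0}$ is totally positive, but its truncation at order $N$ generates $\bigl(1-(\beta x)^{N+1}\bigr)/(1-\beta x)$, whose roots $\beta^{-1}e^{2\pi i k/(N+1)}$ are non-real, so by the polynomial case itself the truncated sequence cannot be totally positive. Hence the finite AESW theorem cannot be bootstrapped to the full statement by sections; Edrei's actual proof extends $f$ (analytic near $0$, as your log-concavity bound correctly shows) to a meromorphic function via Pad\'e/continued-fraction approximants built from the T\"oplitz minors, and then uses Nevanlinna-type growth theory to force the Hadamard factorization into the stated shape. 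Your mention of ``rational sections'' gestures toward the right objects but is not developed, and you yourself flag this as the main obstacle --- appropriately so, since the paper likewise defers this step entirely to the literature. As written, then, your necessity half is an honest roadmap with a broken first step rather than a proof; the sufficiency half stands.
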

%%%%%%%%%%%%%%%%%%%%%%%%%
The last proposition is the most prominent result for characterizing totally-positive sequences. We can prove now the following simple lemma (an almost equivalent statement was presented in \cite{integer, Davydov}).
%%%%%%%%%%%%%%%%%%%%%%%%%
\begin{lemma}\label{integral lemma}
An integral series $f(x)=\sum_{n=0}^\infty a_n x^n$ with $a_0>0$ is totally positive if and only if it is of the form
\begin{equation}
f(x)=\frac{g(x)}{h(x)},
\end{equation}
for some integral polynomials $g(x)$ and $h(x)$ with $h(0)=1$, such that all complex roots of $g(x)$ are negative and all those of $h(x)$ are positive real numbers.    
\end{lemma}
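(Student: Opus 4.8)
The plan is to derive Lemma~\ref{integral lemma} directly from Proposition~\ref{ASW theorem} (the Aissen--Schoenberg--Whitney characterization) by specializing the general totally-positive generating function to the case of \emph{integral} series. First I would normalize: since $a_0 > 0$ and the coefficients are integers, the smallest case $a_0 = 1$ can be assumed (the general $a_0$ case reduces to it by an overall positive integer factor, which does not affect the root structure or total positivity). Then Proposition~\ref{ASW theorem} tells me that total positivity is equivalent to a representation $f(x) = e^{\gamma x}\prod_i(1+\alpha_i x)/\prod_i(1-\beta_i x)$ with $\gamma, \alpha_i, \beta_i \geq 0$. The entire content of the lemma is then to show that \emph{integrality of $f$} forces (i) $\gamma = 0$, (ii) only finitely many nonzero $\alpha_i$ and $\beta_i$, and (iii) that the resulting finite products $g(x) = \prod_i(1+\alpha_i x)$ and $h(x) = \prod_i(1-\beta_i x)$ are polynomials with integer coefficients.

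The core of the argument is a growth/rationality dichotomy. If $\gamma > 0$ or there are infinitely many $\beta_i$, then $f$ is a genuinely transcendental (non-rational) entire-or-meromorphic function, whereas an integral power series $\sum a_n x^n$ that is totally positive and whose associated function is analytic must, under integrality, be rational of the required shape. The cleanest route is: the factor $e^{\gamma x} = \sum_n (\gamma^n/n!)x^n$ can have all integer coefficients only if $\gamma = 0$, since $\gamma^n/n! \in \mathbb{Z}$ for all $n$ forces $\gamma$ to be a nonnegative integer and then $\gamma/n! \cdot \gamma^{n-1} \to 0$ non-integrally unless $\gamma = 0$; so I would rule out the exponential factor first. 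Having set $\gamma = 0$, I then argue that an integral series equal to a convergent product $\prod(1+\alpha_i x)/\prod(1-\beta_i x)$ with integer coefficients must terminate in both numerator and denominator, i.e. be rational, because an infinite product produces coefficients with irrational or unboundedly-denominated values incompatible with integrality (equivalently, $f$ must be a rational function whose Taylor coefficients are integers, which bounds the number of poles and zeros). Writing $f = g/h$ with $g, h$ coprime, $h(0) = 1$, the pole structure $\prod(1-\beta_i x)$ and zero structure $\prod(1+\alpha_i x)$ translate exactly into: all roots of $h$ positive real, all roots of $g$ negative real. Finally, integrality of the coefficients of $f$ together with $h(0)=1$ and the rational-root/Gauss-lemma reasoning yields that $g$ and $h$ themselves have integer coefficients.

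The converse direction is routine: if $f = g/h$ with $g = \prod(1+\alpha_i x)$, $h = \prod(1-\beta_i x)$ (all $\alpha_i,\beta_i > 0$ real) and integer coefficients, then $f$ matches the ASW form with $\gamma = 0$ and finitely many factors, hence is totally positive by Proposition~\ref{ASW theorem}; and it is integral by hypothesis. So the converse is essentially a direct application of the cited theorem in reverse.

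I expect the main obstacle to be the rigorous passage from ``integral totally-positive series'' to ``rational function with finitely many factors.'' The subtle point is excluding infinitely many $\alpha_i$ or $\beta_i$: an infinite convergent product need not obviously fail integrality, so I would need a clean argument---most likely invoking that a power series with integer coefficients and radius of convergence determined by the smallest positive $\beta_i^{-1}$ is rational if and only if it satisfies a linear recurrence (a P\'olya--Carlson type integrality-implies-rationality phenomenon), or more elementarily, bounding the number of real zeros and poles via the integrality constraint on the $\alpha_i, \beta_i$ directly (each $\alpha_i$ and $\beta_i$ must be such that symmetric-function combinations are integers, forcing them into a finite set). This finiteness and the integrality of $g,h$ are where the essential work lies; everything else follows formally from the quoted propositions.
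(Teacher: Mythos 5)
Your proposal follows the same skeleton as the paper's proof — normalize, invoke Proposition~\ref{ASW theorem}, pass from the ASW form to a rational function via an integrality-implies-rationality theorem, then get integrality of $g,h$ by Fatou/Gauss-type reasoning — but two of its steps do not hold up as written. First, your argument forcing $\gamma=0$ is invalid: integrality is a hypothesis on the coefficients of the \emph{full} series $f(x)=e^{\gamma x}\prod_i(1+\alpha_i x)/\prod_i(1-\beta_i x)$, not on the factor $e^{\gamma x}$ in isolation. Observing that $\gamma^n/n!$ cannot all be integers unless $\gamma=0$ proves nothing here, since the non-integer coefficients of the exponential could a priori be repaired by convolution with the coefficients of the infinite products; ruling that out is exactly as hard as the finiteness problem, so the exponential cannot be ``ruled out first'' by this shortcut. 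Second, the step you yourself flag as the essential work — excluding infinitely many $\alpha_i,\beta_i$ — is left as a disjunction of two candidate arguments, and the ``more elementary'' one fails: the individual $\alpha_i,\beta_i$ need not be rational or even algebraic, and no finite subcollection of them has integer symmetric functions, so there is no constraint forcing them ``into a finite set''; only the coefficients of the whole series are integers.

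The repair, which is the paper's actual proof, is to apply the rationality theorem to the full function at once rather than factor by factor: since $\sum_i(\alpha_i+\beta_i)<\infty$, the ASW form is meromorphic in $|x|\le 1$ with finitely many zeros and poles there, and by Salem's theorem a power series with integer coefficients that is meromorphic in the closed unit disk is rational. (Your Pólya--Carlson citation is the right family of results, but note the Pólya--Carlson theorem as usually stated requires radius of convergence exactly $1$, whereas here the radius may be smaller when some $\beta_i>1$; the meromorphic version due to Salem is what is needed.) Rationality then simultaneously forces $\gamma=0$ and truncates both products — a rational function has finitely many zeros and poles and no essential singularity at infinity — so the separate exponential step you attempted is not only flawed but unnecessary. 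The remaining step, that a rational power series with integer coefficients written in lowest terms $g/h$ with $h(0)=1$ has $g,h\in\mathbb{Z}[x]$, is the Fatou-lemma fact you gesture at via Gauss's lemma and is what the paper cites from Stanley; that part of your plan, and your converse direction, are fine.
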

%%%%%%%%%%%%%%%%%%%%%%%%%
\begin{proof}
A sequence $\{a_n\}_{\mathbb{N}_0}$ is totally positive if and only if $\{r a_n\}_{\mathbb{N}_0}$ is totally positive for $r>0$. We set $r=1/a_0$ and by the Proposition \ref{ASW theorem}, we get that 

\begin{equation}
f(x)=a_0e^{\gamma x}\frac{\Pi_i (1+\alpha_i x)}{\Pi_i (1-\beta_i x)}
\end{equation}
Because $\sum_i(\alpha_i+\beta_i)<+\infty$, the function $f$ is meromorphic in $|x|\leq1$ with a finite number of poles and zeros inside the unit disc. By the theorem of Salem~\cite{Salem}, this function is rational, i.e., of the form $g(x)/h(x)$. It is not difficult to show that for rational function $g(x)/h(x)=\sum_{n\in\mathbb{N}_0}a_nx^n$ with $a_n$ integers, polynomials $g(x)$ and $h(x)$ are relatively prime with $h(0)=1$ (see exercise $2(a)$ in Chapter 4 of ~\cite{stanley_2011}).    
\end{proof}
%%%%%%%%%%%%%%%%%%%%%%%%%
Having these in mind, we can write down the proof of the Partition theorem \ref{Partition theorem}. For clarity, we repeat the statement.
\begin{thm1}[Partition]
For $\chi_1(x)=\sum_{s\in\mathbb{N}_0}a_sx^s$ with $a_0>0$, a symmetric function $\prod_{k=1}^d \chi_1(x_k)$ is a $U(d)$ character for all $d\in\mathbb{N}$ if and only if the generating function is of the form
\begin{equation}
\chi_1(x)=\frac{Q_{-}(x)}{Q_+(x)},
 \end{equation}
where $Q_{\pm}(x)$ is an integral polynomial with all positive (negative) roots.  Furthermore $Q_+(0)=1$.
\end{thm1}
\begin{proof}\label{proof of Partition theorem}
Clearly $\{a_n\}$ is integral because $\chi_1(x)$ is a character of $U(1)$. The character $\chi_d(x_1,...,x_d)=\prod_{k=1}^d \chi_1(x_k)$ is a class function over $U(d)$ and $x_1,...,x_d$ are variables on the maximal torus in $U(d)$, i.e., $x_k=e^{i\theta_k}$. For any symmetric function $f(x_1,...,x_d)$, the notation $f(g)$ assumes $f$ being evaluated at eigenvalues $x_1,\dots,x_d$ of matrix $g\in U(d)$. Now, as a consequence of the Littlewood-Richardson rule, the symmetric function $\chi_d(x_1,...,x_d)$ is Schur-positive (expands in non-negative coefficients over Schur polynomials) if and only if the product $s_\mu (x_1,...,x_d)\chi_d(x_1,..,x_d)$ is also Schur-positive. Using this, we expand $s_\mu(x_1,...,x_d)\chi(x_1,...,x_d)$ in Schur polynomials as
\begin{equation}
s_\mu(x_1,...,x_d)\chi_d(x_1,...,x_d)=\sum_\lambda c_{\mu\lambda} s_\lambda(x_1,...,x_d)
\end{equation}
where the sum runs over Young diagrams $\lambda$ and $c_{\mu\lambda}\geq0$. Since Schur polynomials are orthogonal under the Haar measure 

\begin{equation}\label{B6}
\langle s_\lambda,s_\mu \rangle=\int_{U(d)}\overline{s_\lambda(g)}s_\mu(g)dg=\delta_{\lambda,\mu},
\end{equation}
we get the expression 
\begin{equation}
    c_{\mu\lambda}=\int_{U(d)}\chi_d(g)\overline{s_\lambda}(g)s_\mu(g)dg
\end{equation}
Using the fact that character is of factorization form $\chi_d(x_1,\dots,x_d)=\prod_{k=1}^d\chi_1(x_k)$, the conditions of the Proposition \ref{Proposition Bumb} are met and we can rewrite the previous expression as
\begin{equation}
c_{\mu\lambda}=\det(a_{\lambda_i-\mu_j-i+j})_{1\leq i,j\leq d}=D_{d-1}^{\lambda,\mu}(\chi_1),
\end{equation}
where $\{a_n\}$ is a sequence that generates the single-mode character $\chi_1(x)$. Schur-positivity $c_{\mu\lambda}\geq0$ thus reads 
\begin{equation}\label{B10}
D_{d-1}^{\lambda,\mu}(\chi_1)=\det(a_{\lambda_i-\mu_j-i+j})_{1\leq i,j\leq d}\geq 0,
\end{equation}
which is by Proposition \ref{Proposition Bumb} equivalent to the condition that all minors of the T{\"o}plitz matrix $(a_{i-j})_{i,j\in \mathbb{Z}}$ are non-negative. This means $\{a_n\}$ is totally positive. $\{a_n\}$ is also an integral sequence with $a_0>0$, therefore the main result follows by Lemma \ref{integral lemma}. 
\end{proof}
%%%%%%%%%%%%%%%%%%%%%%%%%%%%%%%%%%%%%%%%%%%%%
\section{Fock space decomposition for order-one statistics}\label{proof of Fock space dec}
Suppose two sets of variables $x_1,...,x_n$ and $y_1,...,y_m$ with $m\leq n$. The Cauchy identities \cite{stanley_fomin_1999} are the following 
\begin{eqnarray}
    \prod_{i=1}^n\prod_{j=1}^m (1+x_iy_j)&=&\Sigma_{l(\lambda) \leq m} s_{\lambda}(x_1,...,x_n)s_{\lambda}(y_1,...,y_m),\\
    \prod_{i=1}^n\prod_{j=1}^m\frac{1}{1-x_iy_j}&=&\Sigma_{l(\lambda) \leq m} s_\lambda(x_1,...,x_n)s_\lambda(y_1,...,y_m)
\end{eqnarray}
where $l(\lambda)$ is the length of the diagram (number of rows), and $\lambda'$ is the conjugate partition of $\lambda$. Using these identities, the proof of Theorem \ref{Polarity} is as follows.
  \begin{proof}
  For statistics $[1,\alpha]_{-}$ and $[1,\beta]_{+}$ we have the following $d$-mode characters (defined in \eqref{character})
  \begin{eqnarray}
      \chi^{-}_d(\vec{x})&=&\prod_{i=1}^d(1+\alpha x_i),\\
      \chi^{+}_d(\vec{x})&=&\prod_{i=1}^d\frac{1}{1-\beta x_i}.
  \end{eqnarray}
Now, we apply the first Cauchy identity by setting $m=1$ and $\alpha=y_1$ and we get
  \begin{align}
      \chi^{-}_d(\vec{x})&=\Sigma_{l(\lambda) \leq 1}s_{\lambda}(\alpha)s_{\lambda'}(x_1,...,x_d)\\
      &=\Sigma_{l(\lambda) \leq 1}\alpha^{|\lambda|}s_{\lambda'}(x_1,...,x_d).
     \end{align}
Similarly, we apply the second Cauchy identity by setting $m=1$ and $\beta=y_1$ and we get
     \begin{align}
      \chi^{+}_d(\vec{x})&=\Sigma_{l(\lambda) \leq 1}s_{\lambda}(\beta)s_\lambda(x_1,...,x_d)\\
      &=\Sigma_{l(\lambda) \leq 1}\beta^{|\lambda|}s_\lambda(x_1,...,x_d).
  \end{align}
  \end{proof}
%%%%%%%%%%%%%%%%%%%%%%%%%
\section{Negative occupations and general character}\label{General character}
We set the most general $U(1)$ character as a formal Laurent series  
\begin{equation}\label{Laurent character}
    \chi_1(x)=\sum_{s\in\mathbb{Z}}a_s x^s,
\end{equation}
with integral coefficients, i.e. $a_s\in\mathbb{N}_0$. For the set of $d$ modes, the character decomposes as
\begin{equation}\label{general character decomposition}
  \chi_d(x_1,\dots,x_d)=\prod_{s=1}^d\chi_1(x_s)=
  %(\prod_{k=1}^d\sum_{n_k=-\infty}^{+\infty}c_{n_k}x_k^{n_k}=
  \sum_{k,\lambda} c_{\lambda}^{(k)} s_\lambda^{(k)}(\vec{x}),  
\end{equation}
with $c_{\lambda}^{(k)}\in\mathbb{N}_0$ and irreducible factors being Laurent-Schur polynomials of the form
\begin{equation}\label{Schur-Laurent}
    s_\lambda^{(k)}(x_1,\dots,x_d)=(x_1\dots x_d)^{-k}s_{\lambda}(x_1,\dots,x_d),~~~k\in\mathbb{N}_0.
\end{equation}
These are associated with the (rational) irreducible representations of $U(d)$~\cite{stanley_fomin_1999}. The factor $(x_1\dots x_d)^{-k}=(\mathrm{det}g)^{-k}$ corresponds to the determinant representation, while $s_{\lambda}(x_1,\dots,x_d)$ is the standard Schur polynomial.

Before proceeding further, a few comments are needed. Firstly, we adopt the notation of $k^d=(k,k,\dots,k)$ being the partition of length $d$, and we define $\lambda+k^d=(l_1+k,\dots,\lambda_d+k)$. Now, note that indexing of Laurent-Schur polynomials in \eqref{Schur-Laurent} with a pair $(k,\lambda)$ is not unique, i.e., $s_\lambda^{(k)}(\vec{x})=s_\mu^{(l)}(\vec{x})$ holds whenever $\lambda+l^d=\mu+k^d$. This is because $s_{\lambda+k^d}(\vec{x})=(x_1\dots x_d)^{k}s_\lambda(\vec{x})$~\cite{bump2004lie}. In the expansion \eqref{general character decomposition} we implicitly assume summation over different $s_\lambda^{(k)}$s only, although we write explicitly summation over pairs $(k,\lambda)$ to simplify the notation. Next, functions in \eqref{Schur-Laurent} are orthogonal under Haar measure, i.e.,
\begin{equation}
\langle s_\lambda^{(k)},s_\mu^{(l)} \rangle=\int_{U(d)}(\mathrm{det}g)^{k}\overline{s_{\lambda}(g)}(\mathrm{det}g)^{-l}s_{\mu}(g)dg=\int_{U(d)}\overline{s_{\lambda+l^d}(g)}s_{\mu+k^d}(g)dg=\delta_{\lambda+l^d,\mu+k^d},
\end{equation}
With this, the proof of the main theorem \ref{proof of Partition theorem} and equations \eqref{B6}-\eqref{B10} trivially modify by changing Schur polynomials $s_\lambda$ with $s_\lambda^{(k)}$, and we easily conclude that the generating sequence of \eqref{Laurent character} has to be totally-positive. Edrei has provided their full classification: 
\begin{proposition}[\cite{edrei1953generation}]\label{Edrei theorem}
Let $\lbrace a_s \rbrace_{s\in\mathbb{Z}}$ be a totally-positive sequence and the Laurent series
\begin{equation}\label{Laurent series}
   \sum_{s\in\mathbb{Z}}a_s x^s
\end{equation}
associated with it. If the sequence does not
coincide with a sequence of the form
\begin{equation}\label{exceptional sequence}
    \lbrace a \rho^s \rbrace_{s\in\mathbb{Z}}\qquad\qquad\qquad\qquad(a,\rho>0), 
\end{equation}
then the Laurent series \eqref{Laurent series} converges in some ring
\begin{equation}\label{convergence in ring}
    r_1<|x|< r_2\qquad\qquad\qquad\qquad(0\leq r_1<r_2),
\end{equation}
and the analytic continuation of \eqref{Laurent series} is of the form
\begin{equation}
f(x)=Cx^ke^{c_1 x+c_{-1}x^{-1}}\frac{\Pi_{s\in\mathbb{N}} (1+\alpha_s x)}{\Pi_{s\in\mathbb{N}} (1-\beta_s x)}\frac{\Pi_{s\in\mathbb{N}} (1+\gamma_s x^{-1})}{\Pi_{s\in\mathbb{N}} (1-\delta_s x^{-1})},
\end{equation}
with $k\in\mathbb{Z}$, $C,c_1,c_{-1},\alpha,\beta,\gamma,\delta \geq 0$, and $\sum_{s\in\mathbb{N}} \alpha_s+\beta_s+\gamma_s+\delta_s<+\infty$.
\end{proposition}
Now, we are ready to prove the following statement:
\begin{proposition}\label{finiteness of character}
Let $\lbrace a_s \rbrace_{s\in\mathbb{Z}}$ be an integral $(a_s\in\mathbb{N}_0)$ totally-positive sequence which is not of the form \eqref{exceptional sequence}. Then this sequence cannot extend to double infinity, i.e., there exists $k\in\mathbb{Z}$ such that $a_k=0$ either for $s<k$ or $s>k$.
\end{proposition}
\begin{proof}
    Suppose by contradiction that $a_s$ is non-zero at both $\pm\infty$. We divide the associated Laurent series \eqref{Laurent series} into the positive $S_+(x)=\sum_{s=1}^{+\infty}a_s x^s$ and negative $S_-(x)=\sum_{s=-\infty}^{0}a_s x^s$ parts. Because $a_s$ is non-zero (integer) at infinity, the series $S_+$ converges at best within the unit radius, i.e., $|x|<r\leq1$. Similarly, $S_-$ converges within the radius $1/|x|<R\leq1$. These two convergence conditions are compatible only for $|x|=1$. Since $\lim_{n\to\pm \infty}a_{s}x^s\neq0$ (or does not exist) for $|x|=1$, we conclude \eqref{Laurent series} converges nowhere. This contradicts Proposition \ref{Edrei theorem} because of \eqref{convergence in ring}.  
\end{proof}
From the last proposition, it follows that the general single-mode character \eqref{Laurent character} is either of the form $\chi_1(x)=x^{k}\sum_{s\geq0}c_s x^s$ or $\chi_1(x)=x^{k}(\sum_{s\geq0}c_s x^s)^\ast$ for some $k\in\mathbb{Z}$, where $\ast$ denotes complex conjugate (note that $x^{\ast}=1/x$ because $x=e^{i\varphi}$). The factor $x^k$ builds into $(x_1\dots x_d)^k$ for $d$ modes and corresponds to the determinant representations (power $k$). Our main theorem \ref{Partition theorem} applies directly to the factor $\sum_{s\geq0}c_s x^s$. As follows from Proposition \ref{Edrei theorem}, the only exceptional case is the character generated by the sequence \eqref{exceptional sequence} with $a,\rho\in\mathbb{N}_0$.   
%%%%%%%%%%%%%%%%%%%%%%%%%%%%%%%%%%%%%%%%%%%%%%%%%
\end{document}